\newtheorem{condition}{Condition}
\title{Solving Set Constraints with Comprehensions and Bounded Quantifiers}
\DeclareRobustCommand{\IEEEauthorrefmark}[1]{\textsuperscript{#1}}
\author{\IEEEauthorblockN{
  Mudathir Mohamed\orcid{0000-0002-4644-5756}\IEEEauthorrefmark{1},
  Nick Feng\orcid{0009-0002-3908-2755}\IEEEauthorrefmark{2},  
  Andrew Reynolds\orcid{0000-0002-3529-8682}\IEEEauthorrefmark{1},
  Cesare Tinelli\orcid{0000-0002-6726-775X}\IEEEauthorrefmark{1},  
  Clark Barrett\orcid{0000-0002-9522-3084}\IEEEauthorrefmark{3},
  and
  Marsha Chechik\orcid{0000-0002-6301-3517}\IEEEauthorrefmark{2}
  }
\and
\IEEEauthorblockA{
  \and
\IEEEauthorrefmark{1}\textit{The University of Iowa}
\and
\IEEEauthorrefmark{2}\textit{University of Toronto}
\and
\IEEEauthorrefmark{3}\textit{Stanford University}
}
}
\begin{document}
\maketitle
\iftoggle{paper}{
 \includeversion{paper}
 \excludeversion{report}
}{
 \excludeversion{paper}
 \includeversion{report}
}
%
%
\begin{abstract}
  Many real applications problems can be encoded easily as 
  quantified formulas in SMT.
  However, this simplicity comes at the cost of difficulty during solving
  by SMT solvers.
  Different strategies and quantifier instantiation techniques
  have been developed to tackle this.
  However, 
  SMT solvers still struggle with quantified formulas
  generated by some applications.
  In this paper, we discuss the use of set-bounded quantifiers,
  quantifiers whose variable ranges over a finite set.
  These quantifiers can be implemented using
  quantifier-free fragment of the theory of finite relations
  with a filter operator,
  a form of restricted comprehension, 
  that constructs a subset 
  from a finite set using a predicate.
  We show that this approach outperforms other quantification techniques in
  satisfiable problems generated by the \sleec tool, and is very competitive 
  on unsatisfiable 
  benchmarks compared to LEGOS, a specialized solver for \sleec.
  We also identify a decidable class of constraints with restricted
  applications of the filter operator,
  while showing that unrestricted applications lead to undecidability.
  %
\end{abstract}
%
%
\section{Introduction}
%
%
Problems from many real applications can be encoded naturally in the language of finite sets and relations.
Concrete examples include software design specifications~\cite{alloyRef},
ontologies~\cite{relationsPaper}, database queries~\cite{lparBags}, 
normative requirements~\cite{sleec}, and authorization policies~\cite{cedarPaper}.
Some SMT solvers support the theory of finite sets and relations~\cite{sets,relationsPaper}.
Possible encodings in SMT from the applications listed above
involve the use of quantifiers with variables that range over elements 
of some sets with possibly unbounded but finite cardinality. 
A typical approach is to use standard quantifiers 
and rely on general quantifier instantiation techniques implemented in SMT solvers. 
However, solver performance is generally poor in these cases
because of the inherent difficulty of reasoning about quantified formulas.

The \cvc solver supports a theory of finite sets and relations that was recently
extended with a second-order filter operator~\cite{lparBags}.
The filter operator implements a form of set restricted comprehension 
that constructs a subset from a given set using a given predicate.
The extension makes it now possible to express set-bounded quantification
using only quantifier-free formulas.
This presents an opportunity to encode problems that involve
bounded quantification as quantifier-free problem,
avoiding the challenges of reasoning with quantifiers.

In this paper, we discuss how to encode set-bounded quantifiers
in a subtheory of \cvc's theory of finite sets and relations,
and present very encouraging initial experimental results that evaluate
this encoding on a set of benchmarks from real-world problems. 
%
%
We also discuss the theoretical question on whether there is 
a reasonably expressive logical fragment with set-bounded quantification 
with a decidable satisfiability problem.
We answer the question positively, but also show that restrictions
on the use of the filter operator are not just sufficient but necessary
for decidability.

%
%
%

%
%
This work makes the following \textbf{contributions}:
\begin{enumerate}
\item 
A decision procedure for the satisfiability 
of a large class of quantifier-free formulas
for a theory of finite sets and relations extended 
with the Cartesian product operator and a filter operator.
\item 
A high-level proof that the unrestricted use of
the filter operator compromises our decidability result.
\item
An approach for expressing set-bounded quantifiers
in terms of quantifier-free formulas containing applications of the filter operator.
\item
Initial empirical results over a set of benchmarks from a real-world application 
showing that an encoding to SMT using our approach
leads to better solver performance compared to encodings using standard quantifiers.
\end{enumerate}

%
%
\subsection{Related Work}
Multi-Level Syllogistic (MLS) theory can be taken as the core fragment
for unsorted set theory~\cite{setMapVariablePaper}.
It contains the basic operators \(\sqcup, \sqcap, \setminus, \sqsubseteq, \teq ,\sqin\)
in addition to logical connectives \(\vee, \wedge, \neg\).
Its syntax and semantics consider only sets, with no scalar values,
as the latter can be encoded as (nested) sets.
A decision procedure based on singleton models\footnote{
  A singleton model is one that assigns each set variable
  either \(\varnothing\) or \(\{\varnothing\}\).} was given
by Ferro et al.~\cite{setMapVariablePaper}.
The same paper proved the decidability of an MLS extension with
map variables\footnote{Map variables are binary relations.}
and operators that return as sets their domains and ranges.
%
%

Bansal et al.~\cite{sets} presented
an efficient decision procedure for a natural variant of MLS with a set cardinality operator in the context of many-sorted logic. 
Meng et al.~\cite{relationsPaper} 
extended that theory with operators for finite relations, expressed as finite sets of tuples, 
that include cross product, relational join, transpose, 
and transitive closure. 
They proved the decidability of a small restricted fragment that only accepts  
binary relations along with relational join and transpose operators. 
Recently, Mohamed et al.~\cite{lparBags} extended that theory further  
with filter \(\filter\) and map \(\map\) operators 
to support reasoning about SQL queries.
%
%
It can be shown that 
MLS extended with a set map operator \(\map\) is decidable
when its function argument is uninterpreted.
This can be achieved by a reduction to the MLS extension with
map variables in~\cite{setMapVariablePaper}.
%
%
If interpreted functions are allowed in map terms, 
the fragment is undecidable in general, similarly to the undecidability result 
we present in~\cref{sec:undecidable_set_fragments}.

Set-bounded quantifiers have been studied extensively for 
unsorted theories of finite sets. 
Cantone et al.~\cite{quantifiersUndecidable} showed the undecidability 
of alternating set quantifiers \((\forall \exists)_0\).
Some restricted fragments that are decidable are mentioned in
Cantone~\cite{survey}. 
\universals is an example of a decidable fragment 
that extends MLS with uninterpreted unary functions, 
singletons, and positive universal bounded quantifiers only
with some restrictions on their body expressions. 
\universals is close in spirit to the decidable fragment we present
in~\cref{sec:decidable} for our many-sorted theory of sets, 
albeit the latter allows existential quantifiers in restricted cases. 
Cristi\'{a} et al.~\cite{prolog} discussed decidable fragments for ``restricted'' quantifiers 
(our ``set-bounded'' quantifiers): 
\(\exists, \forall, \exists \forall\), and \(\forall \exists\) 
with no cycles in the ``domain graph''.
Our calculus and implementation in \cvc
decides the satisfiability of formulas in the first three fragments 
\(\exists, \forall, \exists \forall\) but not in \(\forall \exists\).
Currently, it accepts \(\forall \exists\) formulas but without termination guarantees. 
That said, our experimental evaluation does include \(\forall \exists\) problems with
alternating quantifiers, 
and our approach works pretty well with those problems. 
We also note that our calculus supports the cross product operator, 
which is not mentioned in~\cite{prolog}.
%

Deciding MLS extended with Cartesian product is reported as an open problem
by 
\cite{undecidable_hilbert_set,survey}.
Cantone and Ursino~\cite{cantoneDecidableProduct} proved that MLS with unordered Cartesian product operator but without the set membership
operator is decidable in unsorted sets.
We prove (see Corollary~\ref{cor:product}) that 
MLS with membership and Cartesian product operators is decidable in a sorted setting,
where all elements of a set have the same sort.

%
%
\section{Formal Preliminaries}\label{sec:preliminaries}
We define our theories and our calculi in the context of many-sorted logic 
with equality and polymorphic sorts and functions.
We assume the reader is familiar with the following notions from that logic:
signature, term, 
formula, free variable,
interpretation, and satisfiability in an interpretation.
Let \(\Sigma\) be a many-sorted signature.
We will denote sort parameters in polymorphic sorts with \(\alpha\) and \(\beta\),
and monomorphic sorts with \(\tau\).
We will use \(\teq\) as the (infix) logical symbol for equality --- which
has polymorphic rank \(\alpha \times \alpha\) and is always interpreted 
as the identity relation over \(\alpha\).
We assume all signatures \(\Sigma\) contain the Boolean sort \(\sbool\),
always interpreted as the binary set \(\{\mathit{true},\mathit{false}\}\),
and two Boolean constant symbols, \(\ltrue\) and \(\bot\), 
for \(\mathit{true}\) and \(\mathit{false}\).
Without loss of generality, we assume \(\teq\) is the only predicate symbol
in \(\Sigma\), as all other predicates can be modeled as functions
with return sort \(\sbool\).
We will write, e.g., \(p( x )\) as shorthand for \(p( x ) \teq \ltrue\),
where \(p( x )\) has sort \(\sbool\).
We write \(s \tneq t \) as an abbreviation for \(\lnot\,s \teq t\).
\begin{report}
When talking about sorts, we will normally say just ``sort'' to mean monomorphic sort,
and say ``polymorphic sort'' otherwise.
\end{report}
A \define{\(\Sigma\)-term/formula} is a well-sorted term/formula 
all of whose function symbols are from \(\Sigma\).
%
If \(\varphi\) is a \(\Sigma\)-formula and \(\I\) 
is a \(\Sigma\)-interpretation,
we write \(\I \models \varphi\) if \(\I\) satisfies \(\varphi\).
If \(t\) is a term, we denote by \(\I(t)\) the value of \(t\) in \(\I\).
A \define{theory} is a pair \(\T = (\Sigma, \Mo)\), where
\(\Sigma\) is a signature and  \(\Mo\) is a class of \(\Sigma\)-interpretations,
the \define{models} of \(\T\),
that is closed under variable reassignment
(i.e., every \(\Sigma\)-interpretation that differs from one in \(\Mo\)
only in how it interprets the variables is also in \(\Mo\)).

A \(\Sigma\)-formula \(\varphi\) is
\define{satisfiable} (resp., \define{unsatisfiable}) in \(\T\)
if it is satisfied by some (resp., no) interpretation in \(\Mo\).
%
%
Two $\Sigma$-formulas $\varphi_1$ and $\varphi_2$ are \define{equisatisfiable in \(\T\)}
if for every model \(\str A\) of \(\T\) that satisfies $\varphi_1$, there is
a model of \(\T\) that satisfies $\varphi_2$ and differs from \(\str A\) at most
in how it interprets the free variables not shared by $\varphi_1$ and $\varphi_2$.
%

\subsection{A Theory of Finite Relations}
We define a many-sorted theory \(\relationsTheory\) of finite relations.
Its signature \(\relationsSig\) is given in~\cref{fig:relations_sig}.
We use \(\alpha\) and \(\beta\), possibly with subscripts, as sort parameters
in polymorphic sorts.
Additionally, \(\relationsTheory\) has three classes of sorts,
with a corresponding polymorphic sort constructor:
predicate sorts, tuple sorts, and set sorts.
\define{Predicate sorts} are monomorphic instances of
\(\alpha_1 \times \dots \times \alpha_k \rightarrow \bool\) for all \(k \geq 0\).
\define{Tuple sorts} are constructed by the variadic constructor \(\tuple\)
which takes zero or more sort arguments.
For each \(k \geq 0\),
\(\tuple (\tau_1, \ldots, \tau_k)\) denotes the set of tuples of size \(k\)
with elements of sort \(\tau_1, \ldots, \tau_k\), respectively.
\define{Set sorts} are monomorphic instances \(\s(\tau)\) of \(\s(\alpha)\).
The sort \(\s(\tau)\) denotes the set of all \emph{finite} sets
of elements of the domain denoted by sort \(\tau\).
We model relations as sets of tuples and
write \(\relation(\tau_0, \ldots, \tau_k)\) as a shorthand for the sort
\(\s(\tuple (\tau_0, \ldots, \tau_k))\).
%

Signature \(\relationsSig\), summarized in Table~\ref{fig:relations_sig},
contains a subset of the set symbols from Mohamed et al.~\cite{lparBags}. 
%
%
The semantics of the various set operators is the expected one.
The operator \(\sqin\) denotes the set membership relation 
and \(\sqsubseteq\) the set inclusion relation. 
Expressions
\(\opsingleton{e}, s \sqcup t, s \sqcap t, s \setminus t, s \product t\)
denote the singleton set containing \(e\) and 
the union, intersection, difference,  and flat Cartesian product 
of sets \(s\) and \(t\), respectively.
The filter operator \(\sigma\) is a second-order operator 
taking a predicate as its first argument.
Terms \(\filter(p, s)\) denote the set consisting of the elements
of set \(s\) that satisfy predicate \(p\).
We extend the language to allow lambda abstractions and their applications, and
we allow the argument \(p\) of \(\filter(p, s)\) to be 
either a (second-order) variable
or a lambda abstraction,
both of rank \(\alpha \to \bool\)
if \(s\) has sort \(\s(\alpha)\).
For increased readability, we write \(\filter(p, s)\)
in the set comprehension notation as \(\compr{x}{x \sqin s \wedge p(x)}\).

The signature contains also
the set quantifiers \(\setAll\) and \(\setSome\), which we discuss in detail 
in~\cref{sec:all_some}.

In practice, we are not interested in theory \(\relationsTheory\) in isolation 
but in combination with some theory \(\elemsTheory\) of set/relation elements
(e.g., a theory of integers, strings, and so on).
For the rest of the paper then, we fix an extension \(\relationsTheory'\) 
of \(\relationsTheory\), with signature \(\relationsSig'\), 
that incorporates a (possibly combined) theory of elements
whose signature \(\elemsSig\) shares no function symbols with \(\relationsSig\).
Moreover, we consider only formulas none of whose terms have a sort
where \(\s\) is nested in some other constructor or itself.
This is a \emph{genuine restriction} that disallows, for instances,
sorts like \(\s(\s(\int))\) or \(\mathsf{List}(\s(\int))\).
Extending our results so as to drop this restriction is left to future work.

\begin{table}[!tbp]
  \caption{
    Signature \(\relationsSig\) for the theory of relations.}
    \label{fig:relations_sig}
  \begin{threeparttable}[b]
  \scriptsize
  \[
    \begin{array}{l}
      \begin{array}{@{}l@{\quad}l@{\quad}l@{\quad}l@{}}
        \toprule
        \textbf{Symbol}                                                                      & \textbf{Signature}                                                                                     & \textbf{SMT-LIB Syntax}         \\
        \midrule
        \sempty                                                                              & \s(\alpha)                                                                                         & \verb|set.empty|                 \\
        \opsingleton{\text{-}}                                                               & \alpha \rightarrow \s(\alpha)                                                                      & \verb|set.singleton|             \\
        \sqcup                                                                               & \s(\alpha) \times \s(\alpha) \rightarrow \s(\alpha)                                                & \verb|set.union|                 \\
        \sqcap                                                                               & \s(\alpha) \times \s(\alpha) \rightarrow \s(\alpha)                                                & \verb|set.inter|                 \\
        \setminus                                                                            & \s(\alpha) \times \s(\alpha) \rightarrow \s(\alpha)                                                & \verb|set.minus|                 \\
        \sqin                                                                                & \alpha \times \s(\alpha) \rightarrow \bool                                                         & \verb|set.member|                 \\
        \sqsubseteq                                                                          & \s(\alpha) \times \s(\alpha) \rightarrow \bool                                                     & \verb|set.subset|                  \\
        %
        \sigma                                                                               &
        \left(\alpha \rightarrow \bool\right) \times \s(\alpha) \rightarrow \s(\alpha)       & \verb|set.filter|                                                                                                                       \\
        \setAll                                                                              & \left(\alpha \rightarrow \bool\right) \times \s(\alpha) \rightarrow \bool                          & \verb|set.all|                      \\
        \setSome                                                                             & \left(\alpha \rightarrow \bool\right) \times \s(\alpha) \rightarrow \bool                          & \verb|set.some|                     \\
        \midrule
        \tup{ \ldots}                                                                        &
        \alpha_0 \times \dots \times \alpha_k \rightarrow \tuple(\alpha_0, \ldots, \alpha_k) & \verb|tuple|                                                                                                                       \\

        %
        \product                                                                             & \relation(\bm{\alpha}) \times \relation(\bm{\beta}) \rightarrow \relation(\bm{\alpha}, \bm{\beta})\tnote{1} & \verb|rel.product|       \\
        \bottomrule
      \end{array}
      \\[4ex]
    \end{array}
  \]

  \begin{tablenotes}
    \item [1]   Here \(\relation(\bm{\alpha}, \bm{\beta})\) is a shorthand for \(\relation(\alpha_0, \ldots, \alpha_p, \beta_0, \ldots, \beta_q)\)
    when \(\bm{\alpha} = \alpha_0, \ldots, \alpha_p\) and \(\bm{\beta} = \beta_0, \ldots, \beta_q\).
  \end{tablenotes}
\end{threeparttable}
\end{table}

%

\subsection{A Calculus for \(\relationsTheory'\)}\label{sec:relations_calculus}
To reason about quantifier-free formulas in \(\relationsTheory'\) 
we adopt a variant of the calculus described by Mohamed et al.~\cite{lparBags}.
%
Without loss of generality
we assume that every set term in such formulas is \define{flat}, 
i.e., 
\(s, t\) are set variables in all terms of the form
\(e \sqin s, s \sqcup t, s \sqcap t, s \setminus t, s \product t, \filter(p, s)\).
We also assume that all equalities have the form \(x \teq t\) where \(x\) is
a (element or set) variable. 
%
%
%
To simplify the presentation, we further restrict our attention only
to sets and relations over the \emph{same} element domain, denoted generically 
in the following by the \define{element sort} \(\eleSort\).
Our results, however, do not require this restriction.

Thanks to the following lemma, we will further focus on just sets
of \(\relationsSig\)-literals, or \define{relation constraints}, 
and \(\elemsSig\)-literals or \define{element constraints}.

\begin{restatable}{lemma}{setsConstraints}
  \label{lem:setsConstraints}
  For every quantifier-free \(\relationsSig'\)-formula \(\varphi\),
  there are sets \(S_1,\ldots,S_n\) of relation constraints
  and sets \(E_1,\ldots,E_n\) of element constraints
  such that
  \(\varphi\) is satisfiable in \(\relationsTheory'\)
  iff \(S_i \cup E_i\) is satisfiable in \(\relationsTheory'\) for some \(i\in[1,n]\).
\end{restatable}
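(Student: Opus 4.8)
The plan is a two-stage normalization. First I would reduce the satisfiability of $\varphi$ to that of a finite disjunction of conjunctions of literals, and then separate each conjunction into a pure relation part and a pure element part by variable abstraction, in the style of Nelson--Oppen purification. Concretely, I would put $\varphi$ into disjunctive normal form $\varphi \equiv \bigvee_{i=1}^{n} C_i$, where each $C_i$ is a finite set of $\relationsSig'$-literals understood conjunctively. Since this transformation is logically equivalent to $\varphi$, we immediately get that $\varphi$ is satisfiable in $\relationsTheory'$ iff some $C_i$ is. This fixes the index $n$ of the lemma as the number of DNF cubes.

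Next I would purify each $C_i$. By assumption $\relationsSig$ and $\elemsSig$ share no function symbols, and under the single-element-sort restriction their only common sorts are $\eleSort$ and $\sbool$; hence a literal of $C_i$ can be mixed only through an element-sorted subterm that occurs as an argument of a set or relation operator --- for instance $e$ in $e \sqin s$, in a singleton $\opsingleton{e}$, or inside a tuple $\tup{e}$ --- or, symmetrically, a relation-sorted term placed where an element is expected. I would repeatedly select such a maximal alien subterm $t$, replace it with a fresh variable $w$ of the sort of $t$, and conjoin the flat equality $w \teq t$. Each step preserves satisfiability in $\relationsTheory'$ (interpret $w$ as the value of $t$, per the equisatisfiability notion of \cref{sec:preliminaries}) and strictly decreases alien nesting, so the process terminates. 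When it halts every literal is pure; collecting the $\relationsSig$-literals into $S_i$ and the $\elemsSig$-literals into $E_i$ makes $C_i$ equisatisfiable with $S_i \cup E_i$, with the element-sorted variables as the shared interface.

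The step I expect to be the main obstacle is the second-order filter argument, which has no counterpart in textbook purification. A term $\filter(p,s)$ is a genuine $\relationsSig$-term only when $p$ is; but in our language $p$ may be a lambda abstraction $\lambda x.\,\psi$ whose body $\psi$ mentions $\elemsSig$ symbols, so $\filter(\lambda x.\,\psi, s)$ is not a pure relation constraint, and the naive abstraction above fails because $x$ occurs bound in $\psi$. I would handle this by abstracting each such predicate to a fresh second-order predicate variable $q$ of rank $\eleSort \rightarrow \sbool$, keeping $\filter(q,s)$ in the relation part and carrying the correspondence between $q$ and $\lambda x.\,\psi$ as interface data for the calculus of \cref{sec:relations_calculus}, which instantiates the predicate lazily on the concrete element witnesses it generates, so that the element-theory content of $\psi$ surfaces as element constraints only on demand. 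The delicate points here are arguing that this abstraction is again equisatisfiability-preserving --- $q$ may be interpreted as the function denoted by $\lambda x.\,\psi$ --- and that it respects flatness. Chaining the equivalences, $\varphi$ is satisfiable iff some $C_i$ is iff some $S_i \cup E_i$ is, which is the claim; everything outside the filter treatment is routine bookkeeping.
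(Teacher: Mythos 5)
Your first two stages coincide with the paper's own proof: it, too, puts \(\varphi\) into disjunctive normal form and then partitions each cube's literals into relation constraints \(\Sc_i\) and element constraints \(\Ec_i\); your explicit Nelson--Oppen-style abstraction is just a spelled-out version of the flatness assumption the paper makes ``without loss of generality'' before stating the lemma, and that part of your argument is fine. The genuine gap is in your treatment of filter. Replacing \(\lambda x.\,\psi\) by a fresh predicate variable \(q\) and relegating the correspondence between \(q\) and \(\lambda x.\,\psi\) to ``interface data'' for the calculus breaks the biconditional the lemma asserts, because the lemma quantifies only over satisfiability of \(S_i \cup E_i\) in \(\relationsTheory'\) --- no calculus, no side data. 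Equisatisfiability holds in the direction you argue (interpret \(q\) as the function denoted by the lambda) but not conversely: consider the cube \(\filter(\lambda x.\,\lfalse,\, s) \teq s \ \land\ a \sqin s\). It is unsatisfiable (the filter term denotes \(\sempty\), forcing \(s\) to be empty, contradicting \(a \sqin s\)), yet after your abstraction the set \(\{\, s \teq \filter(q, s),\ a \sqin s \,\}\) is satisfiable by interpreting \(q\) as the constant-true predicate. The only repair is to put the binding itself into the constraints, say as \(q \teq \lambda x.\,\psi\) in \(E_i\) --- but then the abstraction has accomplished nothing over leaving the lambda in place.

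The underlying issue is a misreading of what the paper counts as a relation constraint: filter literals keep their lambda predicates even when the body is written in the element language. This mixing is deliberately tolerated --- it is visible in Condition~\ref{condIII}, which speaks of predicate applications being added to \(\Ec\), and in rules \ruleSetFilterUp and \ruleSetFilterDown, which apply the predicate to concrete element witnesses and push the resulting applications \(p(e)\) into \(\Ec\) \emph{during derivations}, not during this reduction. So the correct move at the level of this lemma is no move at all: classify \(\filter(\lambda x.\,\psi, s) \teq t\) as a relation constraint as-is, and let the element-theory content of \(\psi\) surface later inside the calculus, which is exactly the lazy-instantiation behavior you describe, just located in \cref{sec:relations_calculus} rather than in the proof of this lemma.
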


As a final simplification, we assume without loss of generality that
for every term \(t\) of sort 
\(\tuple(\eleSort, \ldots, \eleSort)\) 
occurring in one of the sets \(S_i\) above,
\(S_i\) also contains the constraint \(t \teq \tup{x_0, \ldots, x_k}\)
where \(x_0, \ldots, x_k\) are variables of sort 
\(\eleSort\).
\medskip

\subsubsection{Configurations and Derivation Trees}
The calculus operates on \define{configurations}. 
These are either the distinguished configuration \unsat or
pairs \(\conf{\Sc, \Ec}\) 
where \(\Sc\) is a set of 
constraints over relations
and 
\(\Ec\) is a set 
of constraints over their elements. 
%
%

A derivation rule takes a configuration and, if applicable to it,
generates one or more alternative configurations.
A derivation rule \define{applies} to a configuration \(C\)
if all the conditions in the rule's premises hold for \(C\) \emph{and}
the rule application is not redundant.
An application of a rule is \define{redundant} if it has a conclusion
where each component in the derived configuration is a subset of
the corresponding component in the premise configuration.
We assume that for rules that introduce fresh variables,
the introduced variables are identical whenever the premises triggering 
the rule are the same.
In other words, we cannot generate an infinite sequence of rule applications
by continuously using the same premises to introduce fresh variables.
A configuration other than \unsat is \define{saturated} 
if every possible application of a derivation rule 
to it is redundant.
%
A configuration \(\conf{\Sc, \Ec}\) is \define{satisfiable} in \(\relationsTheory'\)
if the set \(\Sc \cup \Ec\) is satisfiable in \(\relationsTheory'\).

A \define{derivation tree} is a finite tree
where each node is a configuration whose children, if any,
are obtained by a non-redundant application of a derivation rule
to the node.
A derivation tree \define{derives} from a derivation tree \(T\)
if it is obtainable from \(T\) by applying a derivation rule to one of \(T\)'s leaves.
It is \define{closed} if all of its leaves are \unsat.

As we show later,
a closed derivation tree with root \(\conf{\Sc, \Ec}\) is a proof that
\(\Sc \cup \Ec\) is unsatisfiable in  \(\relationsTheory'\).
In contrast, a derivation tree with root \(\conf{\Sc, \Ec}\) and
a saturated leaf 
is a witness that \(\Sc \cup \Ec\) is satisfiable in \(\relationsTheory'\).


\begin{figure*}[!tbp]
  \centering
  %
  \begin{prooftree}
    \RightLabel{\ruleEqUnsat}
    \AxiomC{\(t \not\opequal t \in \Scclosed\)}
    \UnaryInfC{\unsat}
    \DisplayProof \hskip 1em
    \RightLabel{\ruleSetUnsat}
    \AxiomC{\(x \sqin s \in \Sc^*  \quad x \not\sqin s \in \Sc^* \)}
    \UnaryInfC{\unsat}   
    \DisplayProof \hskip 1em
      \RightLabel{\ruleSetUnsat}
      \AxiomC{\(x \opin \opemptyset \in \Scclosed\)}
      \UnaryInfC{\unsat}   
    \end{prooftree}
  \begin{prooftree}
    \RightLabel{\ruleEIdent}
    \AxiomC{\(e_1,e_2 \in \ter{\Sc^*}\)}
    \AxiomC{\(e_1,e_2\) are terms of the same element sort}
    \BinaryInfC{
      \(\Sc := \Sc, e_1 \teq e_2 ~~~
      \Ec := \Ec, e_1 \teq e_2
      ~~~\vert\vert~~~
      \Sc := \Sc, e_1 \tneq e_2 ~~~
      \Ec := \Ec, e_1 \tneq e_2
      \)}
  \end{prooftree}
  \smallskip
  
  \begin{tabular}{c}
    %
    \inferR[\ruleInterUp]
    {x \opin s \in \Scclosed \quad
      x \opin t \in \Scclosed  \quad
      s \opinter t \in \termsof{\Sc} }
    {\Sc := \cadd{\Sc}{x \opin s \opinter t}}
    \qquad
    \inferR[\ruleInterDown]
    {x \opin s \opinter t \in \Scclosed}
    {\Sc := \cadd{\cadd{\Sc}{x \opin s}}{x \opin t}}{}
    \\[2ex]
    \inferR[\ruleUnionUp]
    {x \opin u \in \Scclosed \quad
      u \in \{s,t\}            \quad
      s \opunion t  \in \termsof{\Sc}}
    {\Sc := \cadd{\Sc}{x \opin s \opunion t}}
    \qquad
    \inferR[\ruleUnionDown]
    {x \opin s \opunion t \in \Scclosed}
    {\Sc := \cadd{\Sc}{x \opin s}
      \ \parallel\
      \Sc := \cadd{\Sc}{x \opin t}}
    \\[2ex]
    
   \inferR[\ruleDifferenceUp]
   {x \opin s \in \Scclosed \quad
     s \opsetminus t \in \termsof{\Sc} }
   {\Sc := \cadd{\Sc}{x \opin t}
     \ \parallel\
     \Sc := \cadd{\Sc}{x \opin s \opsetminus t}}
   \quad
   \inferR[\ruleDifferenceDown]
   {x \opin s \opsetminus t \in \Scclosed}
   {\Sc := \cadd{\cadd{\Sc}{x \opin s}}{x \opnotin t}}
   \\[2ex]
    \inferR[\ruleSingleUp]
    {\opsingleton{x} \in \termsof{\Sc}}
    {\Sc := \cadd{\Sc}{x \opin \opsingleton{x}}}
    \qquad
    \inferR[\ruleSingleDown]
    {x \opin \opsingleton{y} \in \Scclosed}
    {\Sc := \cadd{\Sc}{x \opequal y}}
    \\[2ex]    
    \inferR[\ruleSetDiseq]
    {s \not\opequal t \in \Scclosed
    }
    {\Sc := \cadd{\cadd{\Sc}{z \opin s}}{z \opnotin t}
      \quad\parallel\quad
      \Sc := \cadd{\cadd{\Sc}{z \opnotin s}}{z \opin t}}
    \qquad
    \inferR[\ruleEConf]
    {\Ec \text{ is unsatisfiable in } \relationsTheory'}
    {\unsat}
    \\[2ex]
    \inferR [\ruleProductUp]
    {\tup{x_1, \ldots, x_m} \sqin r_1 \in \Scclosed \quad
      \tup{y_1, \ldots, y_n} \sqin r_2 \in \Scclosed \quad
      r_1 \opprod r_2 \in \termsof{\Sc}
    }
    {\Sc := \cadd{\Sc}{\tup{x_1, \ldots, x_m, y_1, \ldots, y_n} \sqin r_1 \opprod r_2}}
    \\[2ex]
    \inferR [\ruleProductDown]
    {\tup{x_1, \ldots, x_m, y_1, \ldots, y_n} \sqin r_1 \opprod r_2 \in \Scclosed \quad
      \arity(r_1) = m}
    {\Sc := \cadd{\cadd{\Sc}{\tup{x_1, \ldots, x_m} \sqin r_1}}{\tup{y_1, \ldots, y_n} \sqin r_2}}
  \end{tabular}
  %
  \begin{prooftree}
    \RightLabel{\rn{Filter up}}
    \AxiomC{\(e \sqin s \in \Sc^*\)}
    \AxiomC{\(\filter(p, s) \in \ter{\Sc}\)}
    \BinaryInfC{
      \(\Ec:= \Ec, p(e) \quad \Sc := \Sc, e \sqin \filter(p, s)
      ~~~\vert\vert~~~
      \Ec:= \Ec, \neg p(e)\quad \Sc := \Sc, e \not\sqin \filter(p, s)\)
    }
  \end{prooftree}
  \begin{prooftree}
    \AxiomC{\( e \sqin \filter(p, s) \in \Sc^*\)}
    \RightLabel{\rn{Filter down}}
    \UnaryInfC{\(\Ec := \Ec, p(e) \quad \Sc := \Sc, e \sqin s\)
    }
  \end{prooftree}

  \caption{The derivation rules. 
  Variable \(z\) in \ruleSetDiseq is fresh.
  In \ruleProductDown, \(\arity(r_1)\) denotes the arity of relation \(r_1\).}
  \label{fig:set_rules}
\end{figure*}
\medskip

\subsubsection{The Derivation Rules}
The rules of our calculus are listed in Figure~\ref{fig:set_rules}.
They are expressed in \define{guarded assignment form} where
the premise describes the conditions on the current configuration
under which the rule can be applied, and
the conclusion is either \unsat, or otherwise describes
\emph{only the changes} to the current configuration.
%
%
Rules with two conclusions, separated by the symbol \(\vert \vert\),
are non-deterministic branching rules, in the style of analytic tableaux.

In the rules, we write \(\Sc, c\), as an abbreviation of \(\Sc \cup \{c\}\) and denote by
\(\mathcal{T}(\Sc)\) the set of all terms and subterms occurring in \(\Sc\).
Because of our focus on the theory \(\relationsTheory\) in this paper,
the calculus relies on an \define{element oracle}
that can decide the satisfiability in \(\relationsTheory'\)
of sets of element constraints.
We also require the computability of the all predicates
used as arguments in applications of filter (\(\sigma\)).

We define the following closures for \(\Sc\) and \(\Ec\) where
\(\models_{\text{tup}}\) denotes entailment in the
theory of tuples, which treats all function symbols other than \(\tup{\_}\) 
as uninterpreted.

{\small
\begin{align*}
  \mathcal{\Sc}^* = \
   & \{(\lnot)s \teq t \mid s, t \in \ter{\Sc};\, \Sc \models_{\text{tup}} (\lnot)s\teq t\}                   \\
  \cup \
   & \{e \sqin s \mid e, s \in \ter{\Sc};\,
  \Sc \models_{\text{tup}} e \teq e' \wedge s \teq s', e' \sqin s' \in \Sc\} 
  \\[1ex]
  \mathcal{\Ec}^* = \ 
   & \{(\lnot)s \teq t \mid s, t \in \ter{\Ec};\, \Ec \models_{\text{tup}} (\lnot)s\teq t\}                   \\
  \cup \
   & \{(\lnot)p(e) \mid (\lnot) p(e') \in \Ec; e, e' \in \ter{\Ec};\, \Ec \models_{\text{tup}} e \teq e'\} \\
\end{align*}
}

\noindent
In the definitions above, the primed variables are implicitly existentially quantified.
Note that \(\Sc^* \supseteq \Sc\) and \(\Ec^* \supseteq \Ec\).

The sets \(\Sc^*\) and \(\Ec^*\) are computable 
from \(\Sc\) and \(\Ec\) by an extension of standard congruence closure procedures
with 
%
rules for deducing the equalities \(s_1 \teq t_1, \ldots, s_n \teq t_n\) 
from tuple equalities of the form
\(\tup{s_1, \dots, s_n} \teq  \tup{t_1, \dots, t_n}\).

Moving to the derivation rules,
rule \ruleEConf is applied when a conflict is found by the element solver.
The other rules generating \(\bot\) should be self-explanatory.
%
%
%
%
Rule \ruleSetDiseq handles disequality between two sets \(s,t\)
by stating that some element, represented by a fresh variable \(z\),
only occurs in \(s\) or in \(t\), but not both.
Rules \ruleUnionUp, \ruleUnionDown, \ruleInterUp, and
\ruleInterDown 
correspond directly to the semantics of their operators.
We omit for brevity an upward and a downward rule for set difference 
since they are similar.
\ruleTableProductUp and \ruleTableProductDown are upward and downward rules
for the \(\product\) operator.
\ruleSetFilterUp splits on whether an element \(e\) in \(s\) satisfies
(the predicate denoted by) \(p\) or not in order to determine its membership
in set \(\filter(p, s)\).
\ruleSetFilterDown concludes that every element in \(\filter(p, s)\)
necessarily satisfies \(p\) and occurs in \(s\).
%

%

\section{Decidability of Restricted Filter}\label{sec:decidable}
%
%
%
%

We prove that, under the conditions below,
the calculus we have introduced is sound, complete and terminating.

\begin{condition} \label{condI}
No predicate in applications of the filter operator \(\filter\) 
includes \define{set terms}, terms with sorts of the form \(\s(\tau)\).
\end{condition}

\begin{condition} \label{condII}
The satisfiability of sets of element constraints is decidable.
\end{condition}

\begin{condition} \label{condIII}
The satisfiability of sets \(\Ec\) of elements constraints 
remains decidable with the addition of predicate applications generated 
by rules \ruleSetFilterUp and \ruleSetFilterDown.
\end{condition}
Note that Condition~\ref{condIII} follows from~\ref{condII}
when all filter predicates are expressed as $\lambda$-terms in the language of \(\Ec\).


\begin{restatable}[Derivation]{definition}{def:derivation}
  \label{def:derivation}
  Let \(S = \Sc_0 \cup \Ec_0\) be a set of \(\relationsTheory\)-constraints.
  A \define{derivation} of \(S\) is a sequence \((T_i)_{0 \leq i < \kappa}\)
  of derivation trees, with \(\kappa\) finite or countably infinite,
  such that \(T_{i+1}\) derives from \(T_i\) for all \(i\)
  and \(T_0\) is a singleton tree with root \(\conf{\Sc_0, \Ec_0}\).
  A \emph{refutation} of \(S\) is a finite derivation of \(S\) 
  that ends with a closed tree.
\end{restatable}
%

A derivation strategy is \define{progressive} if it halts 
only with a closed tree or one with a saturated configuration.
A calculus is \define{terminating} if
every progressive derivation strategy for it eventually halts.
%

Let \(\filterDecidable\) be the sublanguage of constraints in \(\relationsTheory\)
that satisfy Condition~\ref{condI} and contain set operators exclusively
from \(\{\teq, \sqin, \sqcup, \sqcap, \setminus, \product, \filter\}\).
Note that the omission of \(\sqsubseteq\) is no real restriction 
since \(s \sqsubseteq t\) can be expressed as \(s \teq s \sqcap t\).
We refer to constraints from \(\filterDecidable\) as
\define{\(\filterDecidable\)-constraints} and restrict our attention to them
in this section.
 
%

\subsection{Termination}
To start, we argue that the applicability of every rule in~\cref{fig:set_rules} is decidable.
This is trivially the case for rule \ruleEConf thanks to 
our Conditions~\ref{condII} and~\ref{condIII}.
For the other rules, the argument is fairly straightforward.
For example, \(\Sc^*\) can be computed from \(\Sc\) by a congruence closure algorithm
so checking the presence of constraints in \(\Sc^*\) is decidable.
%
%
Therefore, we focus on proving that the calculus has no infinite derivations
when starting with a finite set of \(\filterDecidable\)-constraints.


\renewcommand{\arraystretch}{1.3}
\begin{table*}
  \begin{center}
    \caption{Ranking functions for relation rules.
    Here, \(\vec{x}=\tup{x_1, \dots, x_m}\),
    \(\vec{y}=\tup{y_1, \dots, y_n}\),
    and 
    \(\tup{\vec{x}, \vec{y}}\)
     denotes \(\tup{x_1, \dots, x_m, y_1, \dots, y_n}\).
    }
  \label{fig:set_ranking}
    \begin{tabular}{l|l|l}
      \toprule
      \(f_i\)    & Rule                & Definition                                                                                                                       \\
      \midrule
      \(f_1, f_2\)    & \ruleInterUp, \ \ruleInterDown        & \(e_2 \cdot s_0^2 - \card{\{x \sqin s \sqcap t \mid x \sqin s \sqcap t \in S\}} \)
      , \quad
      %
      \(e_2 \cdot s_0^2  - \card{\{x \sqin s \mid x \sqin s \in S\}} \)                                                             \\
      \(f_3, f_4\)    & \ruleUnionUp, \ \ruleUnionDown       & \(e_2 \cdot s_0^2 - \card{\{x \sqin s \sqcup t \mid x \sqin s \sqcup t \in S\}} \)
      , \quad
      %
      \(e_2 \cdot s_0^2  - \card{\{x \sqin s \mid x \sqin s \in S\}} \)                                                             \\
      \(f_5, f_6\)    & \ruleDifferenceUp, \ \ruleDifferenceDown   & \(e_2 \cdot s_0^2  - \card{\{x \sqin s \mid x \sqin s \in S\}} \)
      , \quad
      %
      \(e_2 \cdot s_0^2  - \card{\{x \not\sqin s \mid x \not\sqin s \in S\}} \)                                                     \\
      \(f_7, f_8\)    & \ruleSingleUp, \ \ruleSingleDown       & \(s_0  - \card{\{x \sqin [x] \mid x \sqin [x] \in S\}} \)
      , \quad
      %
      \(e_2 \cdot s_0  - \card{\{x \teq y \mid x \teq y, x \sqin [y] \in S\}} \)                                                    \\
      \(f_9\)    & \ruleSetDiseq       & \(s_0^2  - \card{\{\tup{ z_{s,t} \sqin s, z_{s,t} \not\sqin t} \mid z_{s,t} \sqin s,z_{s,t} \not\sqin t, s \tneq t \in S\}} \) \\
      \(f_{10}, f_{11}\) & \ruleProductUp, \ruleProductDown      & \(e_2^2 \cdot s_0^2  - \card{\{\tup{ \vec{x}, \vec{y}} \mid \tup{\vec{x}, \vec{y}} \sqin s\product t \in S\}} \)
      , \quad
      %
      \(e_2^2 \cdot s_0^2  - \card{\{\tup{ \vec{x}, \vec{y}} \mid \vec{x} \sqin s \in S, \vec{y} \sqin t \in S\}} \)              \\
      %
      %
      \(f_{12}\) & \ruleEIdent          & \(e_2^2 - \card{\{e_1 \teq e_2 \mid e_1 \teq e_2 \in S\}} - \card{\{e_1 \tneq e_2 \mid e_1 \teq e_2 \in S\}} \)                \\
      \(f_{13}\) & \ruleSetFilterUp    & \(
      \hspace{-.65em}
      \begin{array}{l@{\,}c@{\,}l}
        e_2 \cdot s_0 & - & \card{\{\tup{p(e), e \sqin \filter(p, s)}  \mid p(e) \in E, e \sqin \filter(p, s) \in S \}}                   \\
                       & - & \card{\{\tup{\neg p(e), e \not\sqin \filter(p, s)}  \mid \neg p(e) \in E, e \not\sqin \filter(p, s) \in S \}}
      \end{array}
      \)                                                                                                                                                                  \\
      \(f_{14}\) & \ruleSetFilterDown  & \(
      e_2 \cdot s_0 - \card{\{\tup{p(e), e \sqin s} \mid p(e) \in E, e \sqin s \in S\}}
      \)                                                                                                                                                                  \\
      \bottomrule
    \end{tabular}
  \end{center}
\end{table*}

\begin{restatable}{lemma}{lem:relationsProductTerminating}
  \label{lem:relations_product_terminating}
  Under Conditions \ref{condI}--\ref{condIII},
  every derivation of a set of \(\filterDecidable\)-constraints 
  is finite.
\end{restatable}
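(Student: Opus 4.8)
The plan is to exhibit a global termination measure and show that every rule application strictly decreases it in a well-founded order. The natural candidate is the vector of ranking functions $(f_1, \ldots, f_{14})$ given in \cref{fig:set_ranking}, compared lexicographically (or, equivalently, taken as a single tuple in $\Ns^{14}$ ordered by the product/lexicographic order). First I would establish a \emph{finite universe} of terms and atoms that the derivation can ever produce. The key observation is that no rule introduces genuinely new set terms: the upward and downward rules for $\sqcup, \sqcap, \setminus, \product$ only propagate membership atoms $e \sqin s$ among set terms already present in $\termsof{\Sc}$, and \ruleSetFilterUp/\ruleSetFilterDown only add atoms $e \sqin \filter(p,s)$, $p(e)$, $\neg p(e)$ over existing filter terms. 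The only rules that introduce fresh variables are \ruleSetDiseq (the element $z$) and, implicitly, any Skolemization; but by the redundancy convention and the stipulation that identical premises introduce identical fresh variables, each such variable is introduced at most once per triggering premise. Hence I would argue that there is a finite bound $s_0$ on the number of set terms and a finite bound $e_2$ on the number of element terms reachable in any derivation, and these bounds depend only on the initial constraint set.

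Next I would fix these bounds $s_0$ and $e_2$ and verify that each $f_i$ is a well-defined non-negative integer throughout the derivation. Each $f_i$ has the shape (a fixed polynomial in $s_0, e_2$) minus (the cardinality of a set of atoms of a specific syntactic form currently present in the configuration). Because the universe of atoms of each form is finite and bounded by the corresponding polynomial, each $f_i$ stays in $\Ns$; this is where the coefficients like $e_2 \cdot s_0^2$ and $e_2^2 \cdot s_0^2$ earn their keep, as they must dominate the maximum possible number of atoms of the relevant shape (e.g.\ membership atoms $e \sqin s \sqcap t$ are bounded by $e_2 \cdot s_0^2$). I would then check, rule by rule, that a non-redundant application strictly decreases the $f_i$ associated with that rule while not increasing any $f_j$ with $j$ earlier in the order. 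Since a non-redundant application by definition adds at least one atom not already present (of exactly the form counted by $f_i$), the subtracted cardinality strictly increases and so $f_i$ strictly decreases; the main content is confirming that the atoms added by rule $i$ are not counted by any earlier $f_j$, so the lexicographic measure drops.

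The branching rules (\ruleUnionDown, \ruleDifferenceUp, \ruleSetDiseq, \ruleSetFilterUp, \ruleEIdent) require a small extra argument: I would show the measure strictly decreases along \emph{every} branch, so that each path in any derivation tree is finite; combined with the finite branching factor, König's lemma then gives finiteness of the whole tree, and the definition of derivation (a sequence of trees each extending the previous at a leaf) yields finiteness of the derivation itself. I expect the \textbf{main obstacle} to be the interaction between the filter rules and the freshly introduced variables. Condition~\ref{condI} is essential here: by forbidding set terms inside filter predicates, it guarantees that applying $p$ to an element term cannot manufacture new set terms, so $s_0$ genuinely stays bounded and the filter rules only shuffle atoms within the fixed element/set universe. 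Making the bound $s_0$ rigorous under \ruleSetDiseq — ensuring the fresh $z_{s,t}$ variables are indexed by the finitely many disequality pairs and therefore finite in number — is the delicate bookkeeping step, and I would lean on the convention that identical premises yield identical fresh variables to close it.
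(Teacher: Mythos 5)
Your proposal is correct and takes essentially the same route as the paper's proof: the same finite-universe bounds $s_0$ and $e_2$ (with \ruleSetDiseq, \ruleProductUp, \ruleProductDown as the only term-creating rules and Condition~\ref{condI} guaranteeing the filter rules create none), and the same ranking-function vector from \cref{fig:set_ranking} under a well-founded order on tuples — the paper uses the pointwise extension of $>$ where you use lexicographic, but both work because configurations only grow, so no $f_j$ can ever increase. The only difference is presentational: you make explicit the per-branch decrease plus K\"onig's-lemma step for finiteness of the whole tree, which the paper leaves implicit.
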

\begin{proof}[Proof sketch]
  %
  %
  Suppose we start with a configuration \(C = \conf{\Sc,\Ec}\) in \(\filterDecidable\).
  For uniformity, and without loss of generality, suppose all set terms denote
  relations, i.e., sets whose elements are tuples.
  Now let \(e_0 = |\ter{\Ec}|\) and \(s_0 = |\ter{\Sc}|\)
  where the sets \(\ter{\Ec}\) and \(\ter{\Sc}\) consist of
  all the finitely-many terms of sort \(\tupleSort\) and \(\s\)
  in \(\Sc\) and \(\Ec\), respectively.
  Since none of the derivation rules introduces new \(\s\) terms,
  \(s_0\) is constant in all derivation trees with root \(C\).
  %
  %
  None of the rules except for \ruleSetDiseq, \ruleProductUp and \ruleProductDown
  generate new element terms.
  (Rules \ruleSetFilterUp and \ruleSetFilterDown do not introduce new element terms
  because of our assumption that the predicates do not include set terms.)
  Since \ruleSetDiseq can be applied only once for each disequality constraint,
  %
  %
  \(e_0\) can increase by at most \(s_0^2\)
  in all derivation trees. 
  Let \(e_1 = e_0 + s_0^2\).
  Rules \ruleProductUp and \ruleProductDown introduce up to 
  \(e_1^2\) and \(2 e_1\) new element terms, respectively.
  This means that the number of element terms in all derivation is at most 
  \(e_2 = e_1^2 + 2e_1 + e_1 = e_1^2 + 3e_1\).
  %
    
  We now show that every derivation starting from \(C\) is finite.
  We do that by defining a well-founded order \(\succ\)
  on the set of configurations and prove that applying each 
  rule to a configuration \(X = \conf{S,E}\) yields a configuration \(Y\)
  such that \(X \succ Y\). 
  The order is defined as follows: \(X \succ Y\) iff
  \begin{enumerate}
    \item \(X \neq \unsat\) and \(Y = \unsat\); or
    \item \(X \neq \unsat\), \(Y \neq \unsat\) and
          \(
          \tup{f_1(X), \dots, f_{\terminatingSetCount}(X)} \succ_\mathrm{p}
          \tup{f_1(Y), \dots, f_{\terminatingSetCount}(Y)}
          \)
          where \(f_i\) are ranking functions defined in~\cref{fig:set_ranking}
          and \(\succ_\mathrm{p}\) is the pointwise ordering extension of 
          \(>\) over the integers.
  \end{enumerate}
  Since the conflict rules \ruleEConf, \ruleSetUnsat, \ruleEmptyUnsat, and \ruleEqUnsat
  derive the configuration \(\unsat\), which is a minimal element of $\succ$,
  the claim \(X \succ Y\) is immediate for them.
  Each application of the other rules only reduces the value 
  of its corresponding ranking function in~\cref{fig:set_ranking}
  and leaves the value of the other functions unchanged.
  %
  %
  For these rules too we then have that \(X \succ Y\).
  To conclude the proof we only have to argue that \(\succ\) is well-founded
  but this is a consequence of the fact that, in any given derivation tree,
  every function in~\cref{fig:set_ranking} is bounded below by 0.
\end{proof}

Termination is a direct consequence of Lemma~\ref{lem:relations_product_terminating}.

\begin{restatable}{proposition}{lem:setFilterTerminating}
  \label{lem:setFilterTerminating}
   Under Conditions \ref{condI}--\ref{condIII},
   the calculus is terminating over sets of \(\filterDecidable\)-constraints.
\end{restatable}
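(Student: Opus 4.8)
The plan is to obtain this proposition as an essentially immediate corollary of Lemma~\ref{lem:relations_product_terminating}, by unfolding the definitions of \emph{progressive strategy} and \emph{terminating calculus} and then arguing by contradiction. The real content of termination---the well-founded ranking of Table~\ref{fig:set_ranking} and the verification that each rule strictly decreases the associated measure---has already been established in the lemma, so here only the definitional bridge needs to be spelled out.

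First I would recall the relevant definitions. A progressive derivation strategy, started on a finite set $S = \Sc_0 \cup \Ec_0$ of $\filterDecidable$-constraints, produces a derivation in the sense of Definition~\ref{def:derivation}: a sequence $(T_i)_{0 \le i < \kappa}$ of derivation trees with $T_0 = \conf{\Sc_0, \Ec_0}$ and each $T_{i+1}$ deriving from $T_i$. By the definition of \emph{progressive}, the strategy halts only upon reaching a closed tree or one exhibiting a saturated leaf; and a strategy that has not reached such a halting configuration must, by definition of \emph{derives from}, apply some non-redundant rule to a leaf, thereby producing a strictly longer sequence. Hence ``the strategy does not halt'' is equivalent to ``the generated derivation is infinite,'' i.e.\ $\kappa = \omega$.

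Second, I would argue by contradiction. Suppose some progressive strategy for the calculus, run on a finite set of $\filterDecidable$-constraints under Conditions~\ref{condI}--\ref{condIII}, fails to halt. By the observation above it then generates an infinite derivation of $S$. This directly contradicts Lemma~\ref{lem:relations_product_terminating}, which asserts that under the same conditions every derivation of a set of $\filterDecidable$-constraints is finite. Therefore no progressive strategy can run forever; every progressive strategy halts, which is precisely the statement that the calculus is terminating over $\filterDecidable$-constraints.

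I do not anticipate a substantive obstacle. The only point deserving a line of care is the equivalence between ``a progressive strategy not halting'' and ``generating an infinite sequence $(T_i)_{i<\omega}$,'' so that the infinite object forbidden by the lemma is exactly the one a non-halting run would produce; everything else is the invocation of the lemma. If the write-up is expected to be self-contained, I would also briefly note that each $T_i$ is by definition a \emph{finite} tree, so ``finite derivation'' already rules out both infinitely long runs and infinitely large individual trees, leaving no loophole for non-termination.
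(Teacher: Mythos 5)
Your proposal is correct and matches the paper exactly: the paper itself dispatches this proposition with the single remark that ``Termination is a direct consequence of Lemma~\ref{lem:relations_product_terminating},'' and your write-up merely spells out the definitional bridge (non-halting progressive strategy $\Leftrightarrow$ infinite derivation) that the paper leaves implicit. Nothing further is needed.
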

%

\subsection{Refutation Soundness}

The refutation soundness of the calculus follows from the fact
that all of its derivation rules preserve equisatisfiability.

\begin{restatable}{lemma}{setFilterSound}
  \label{lem:setFilterSound}
  For every rule of the calculus, 
  the premise configuration is equisatisfiable in \(\relationsTheory\) 
  with the disjunction of its conclusions.
\end{restatable}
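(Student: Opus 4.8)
The plan is to verify the claim by a case analysis over the rules of \cref{fig:set_rules}, exploiting two facts: each rule's conclusion describes only the \emph{additions} to the premise configuration $\conf{\Sc, \Ec}$, and configuration satisfiability is by definition just satisfiability of $\Sc \cup \Ec$ in $\relationsTheory$. Accordingly, I would sort the rules into three families --- the conflict rules deriving $\unsat$, the deterministic ``up/down'' rules that add a single entailed constraint, and the branching rules that perform a case split --- and dispatch each family by the defining (bi-)implication of the operator it concerns. A single preliminary observation, to be stated once and reused throughout, is that the closures satisfy $\Scclosed \supseteq \Sc$ and $\Ec^* \supseteq \Ec$ and contain only literals \emph{entailed} by $\Sc$ (resp.\ $\Ec$) under $\models_{\text{tup}}$; this is exactly the soundness of the congruence-closure-plus-tuple procedure that computes them, and it licenses reading every rule's premise as a genuine semantic fact about $\Sc \cup \Ec$.

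For the conflict rules \ruleEqUnsat, \ruleSetUnsat, and \ruleEConf, the disjunction of conclusions is $\unsat$, so I must show the premise is itself unsatisfiable. For the first two this is immediate from the closure soundness above: membership of $t \opdisequal t$, of a complementary pair $x \sqin s$ and $x \opnotin s$, or of $x \opin \opemptyset$ in $\Scclosed$ forces a contradiction in every model; and for \ruleEConf the premise $\Ec$ is unsatisfiable by hypothesis. For the deterministic rules \ruleInterUp, \ruleInterDown, \ruleUnionUp, \ruleSingleUp, \ruleSingleDown, \ruleDifferenceDown, \ruleProductUp, and \ruleProductDown, each added constraint is a logical consequence in $\relationsTheory$ of a constraint already present, via the defining equivalence of the operator: e.g.\ $x \sqin s \sqcap t \liff x \sqin s \wedge x \sqin t$, $x \sqin s \setminus t \liff x \sqin s \wedge x \not\sqin t$, $x \sqin \opsingleton{y} \liff x \teq y$, and, using $\arity(r_1) = m$, $\tup{\vec{x},\vec{y}} \sqin r_1 \product r_2 \liff \vec{x} \sqin r_1 \wedge \vec{y} \sqin r_2$. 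Since the added literal is entailed and no fresh variables appear, $\Sc \cup \Ec$ and the augmented set have exactly the same models, so equisatisfiability holds in both directions trivially.

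For the branching rules \ruleEIdent, \ruleUnionDown, \ruleDifferenceUp, and \ruleSetFilterUp, I would instead show the premise is equivalent in $\relationsTheory$ to the disjunction of the two branches. This rests on valid case splits: $e_1 \teq e_2 \vee e_1 \tneq e_2$; $x \sqin s \sqcup t \liff x \sqin s \vee x \sqin t$; and, given $x \sqin s$, the split into $x \sqin t$ versus $x \sqin s \setminus t$ according to whether $x \sqin t$ or $x \not\sqin t$. A model therefore satisfies the premise iff it satisfies one branch. The fresh-variable rule \ruleSetDiseq needs slightly more care: $s \tneq t$ is equisatisfiable with $(z \sqin s \wedge z \not\sqin t) \vee (z \not\sqin s \wedge z \sqin t)$ for a fresh $z$, and because $z$ occurs in neither $\Sc$ nor $\Ec$, any witnessing model of a branch differs from a model of the premise only in its interpretation of $z$ --- precisely the latitude granted by the definition of equisatisfiability in $\relationsTheory$.

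The main obstacle is the filter pair \ruleSetFilterUp and \ruleSetFilterDown, where the operator's defining equivalence $e \sqin \filter(p,s) \liff e \sqin s \wedge p(e)$ must be decomposed \emph{across} the two components of a configuration, with the membership halves landing in $\Sc$ and the predicate halves $p(e)$ (resp.\ $\neg p(e)$) landing in $\Ec$. The delicate point is that this split is only well-posed because Condition~\ref{condI} guarantees $p(e)$ is a genuine element constraint containing no set terms, so that placing it in $\Ec$ is meaningful and the two components can be reasoned about jointly as $\Sc \cup \Ec$; under that condition the up-rule realizes the valid disjunction over $p(e)$ and the down-rule the forward implication, and equisatisfiability follows as in the branching and deterministic cases respectively. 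I expect the bulk of the effort here to be bookkeeping to confirm that the closures $\Scclosed$ and $\Ec^*$ interact correctly with this cross-component split, which is why I would isolate the closure-soundness claim at the outset.
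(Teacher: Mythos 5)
Your proposal is correct and takes essentially the same route as the paper's own proof: a rule-by-rule semantic case analysis resting on the soundness of the closures \(\Scclosed\), \(\Ec^*\) and the defining semantics of each operator, with the filter pair handled exactly as the paper does (a forward implication for \ruleSetFilterDown, a case split on \(p(e)\) versus \(\neg p(e)\) for \ruleSetFilterUp, and fresh-variable latitude for \ruleSetDiseq); the paper merely writes out the filter cases and declares the rest similar, whereas you spell out all three families. One small overstatement: your appeal to Condition~\ref{condI} in the filter case is unnecessary, since satisfiability of a configuration \(\conf{\Sc,\Ec}\) is by definition satisfiability of \(\Sc \cup \Ec\) in \(\relationsTheory\), so the cross-component placement of \(p(e)\) is semantically immaterial and the lemma (like the paper's proof) holds without that condition.
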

%
%
\begin{restatable}[Soundness]{proposition}{prop:setFilterSound}
  \label{prop:setFilterSound}
  Every set of \(\filterDecidable\)-constraints that has a refutation is
  unsatisfiable in \(\relationsTheory\).
\end{restatable}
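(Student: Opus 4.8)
The plan is to obtain the proposition as a direct consequence of the equisatisfiability Lemma~\ref{lem:setFilterSound}, by inducting on the structure of the closed derivation tree that witnesses the refutation. Concretely, I would establish the following slightly more general claim: \emph{for every derivation tree $T$ with root $C$, if $T$ is closed then $C$ is unsatisfiable in $\relationsTheory$.} Applying this claim to the final (closed) tree of the refutation, whose root is $\conf{\Sc_0, \Ec_0}$ by Definition~\ref{def:derivation}, yields that $\Sc_0 \cup \Ec_0$ is unsatisfiable, which is exactly what the proposition asserts about the original set of $\filterDecidable$-constraints.

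For the claim itself I would argue by induction on the height of $T$. In the base case $T$ is a single node; since $T$ is closed, this node is a leaf equal to $\unsat$, which is unsatisfiable by definition. In the inductive step $C \neq \unsat$, and its children $C_1, \dots, C_k$ are produced by a non-redundant application of some rule $R$ of the calculus to $C$. Because $T$ is closed, each subtree rooted at $C_i$ is itself closed, so the induction hypothesis gives that every $C_i$ is unsatisfiable. The $C_i$ are precisely the conclusions of $R$, so Lemma~\ref{lem:setFilterSound} tells us that $C$ is equisatisfiable with the disjunction $C_1 \vee \dots \vee C_k$. Since each disjunct is unsatisfiable, so is the disjunction, and hence $C$ is unsatisfiable, completing the induction.

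The only point requiring real care is the interaction between equisatisfiability and the fresh variables introduced by branching rules such as \ruleSetDiseq and \ruleProductDown, and this is where I expect the genuine work to lie. For the refutation direction we only need one implication of equisatisfiability, namely that satisfiability of the premise $C$ forces satisfiability of at least one conclusion $C_i$: given a model $\str A$ of $C$, Lemma~\ref{lem:setFilterSound} supplies a model of some $C_i$ that differs from $\str A$ only on the fresh, non-shared variables. Read contrapositively, this is exactly what lets unsatisfiability of all the $C_i$ propagate back up to $C$; and because the fresh variables introduced at distinct rule applications are distinct (as stipulated for the calculus), the local model adjustments compose without clash along a root-to-leaf path, so no global bookkeeping of the witnesses is needed.

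I would also remark that this argument is entirely independent of Conditions~\ref{condI}--\ref{condIII} and of the restriction to $\filterDecidable$-constraints: soundness rests only on the per-rule equisatisfiability of Lemma~\ref{lem:setFilterSound}, which holds for arbitrary $\relationsTheory$-constraints. Those conditions, together with the $\filterDecidable$ restriction, are instead what the termination argument of Lemma~\ref{lem:relations_product_terminating} and the completeness argument depend on.
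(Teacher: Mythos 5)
Your proposal is correct and follows essentially the same route as the paper: a structural induction on closed derivation trees, using Lemma~\ref{lem:setFilterSound} to propagate unsatisfiability from the conclusions of each rule application back to its premise, and then observing that the root of the refutation is the configuration $\conf{\Sc_0, \Ec_0}$ for the original constraint set. Your added care about the one-sided reading of equisatisfiability and the fresh variables introduced by branching rules is a fair elaboration of a point the paper's terser proof leaves implicit, but it does not change the argument.
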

\begin{proof}
  Given that, by Lemma~\ref{lem:setFilterSound}, every rule preserves
  constraint equisatisfiability,
  we can argue by structural induction on derivation trees that
  the root of any closed derivation tree is unsatisfiable in \(\relationsTheory\).
  The claim then holds because every refutation of a set
  \(S\) of \(\filterDecidable\)-constraints starts with a configuration 
  that is equisatisfiable with \(S\) in \(\relationsTheory\).
\end{proof}

%

\subsection{Refutation Completeness}

Like most calculi, ours is not refutation complete 
for \emph{arbitrary} derivation strategies, 
even if restricted to \(\filterDecidable\)-constraints.
However, it is complete with progressive ones.

\begin{restatable}[Completeness]{proposition}{lem:setFilterComplete}
  \label{lem:setFilterComplete}
  Under Conditions \ref{condI}--\ref{condIII},
  if \(S = \Sc_0 \cup \Ec_0\) is a set of \(\filterDecidable\)-constraints
  unsatisfiable in \(\relationsTheory\),
  every derivation of \(S\) generated with a progressive derivation strategy 
  extends to a refutation.
\end{restatable}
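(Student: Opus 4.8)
The plan is to argue by contradiction, reducing completeness to a \emph{model existence} property for saturated configurations. First I would invoke the termination result (\cref{lem:setFilterTerminating}): every progressive derivation of $S$ is finite and therefore halts. By the definition of \emph{progressive}, it halts either with a closed tree --- in which case it already is (an extension to) a refutation and we are done --- or with a tree all of whose leaves are $\unsat$ or saturated, at least one being a saturated configuration $\conf{\Sc,\Ec}\neq\unsat$. So it suffices to rule out the latter case when $S$ is unsatisfiable.

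The heart of the argument is the claim that \textbf{every saturated configuration $\conf{\Sc,\Ec}\neq\unsat$ is satisfiable in $\relationsTheory$}. To prove it I would construct a model $\I$ as follows. Since the conflict rule \ruleEConf did not apply to the saturated leaf, the element constraints $\Ec$ --- together with the predicate applications added by \ruleSetFilterUp and \ruleSetFilterDown --- are satisfiable; by Conditions~\ref{condII} and~\ref{condIII} the element oracle returns a model $\str A$ of them. Saturation of \ruleEIdent guarantees that $\str A$ realizes a complete arrangement of the finitely many element terms, so membership can be read off modulo the induced equivalence classes. I would then extend $\str A$ to the set and relation variables by setting $\I(s)=\set{\I(e)\mid e\sqin s\in\Scclosed}$ for each such $s$; each $\I(s)$ is finite because $\ter{\Sc}$ is finite, so $\I$ is a genuine $\relationsTheory$-model.

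It then remains to verify constraint by constraint that $\I\models\Sc\cup\Ec$. Element constraints hold since $\str A\models\Ec$. Positive memberships hold by construction, and a negative membership $e\not\sqin s$ holds because otherwise some congruent $e'$ with $e'\sqin s\in\Scclosed$ would, via congruence closure, force $e\sqin s\in\Scclosed$ and trigger \ruleSetUnsat, contradicting saturation. For each compound operator I would read the corresponding up/down rules under saturation as the two inclusions that define its semantics: \ruleUnionUp/\ruleUnionDown give $\I(s\sqcup t)=\I(s)\cup\I(t)$, \ruleInterUp/\ruleInterDown and the difference rules give the analogous equalities, \ruleSingleUp/\ruleSingleDown give $\I(\opsingleton{x})=\set{\I(x)}$, and \ruleProductUp/\ruleProductDown give the flat Cartesian product, using the normalization that every tuple term has the form $\tup{x_0,\dots,x_k}$. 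A disequality $s\tneq t$ is witnessed by the fresh element introduced by a saturated \ruleSetDiseq. With the lemma in hand, completeness follows: along the branch to the saturated leaf each rule preserves equisatisfiability (\cref{lem:setFilterSound}), so a satisfiable saturated leaf would make the root $\conf{\Sc_0,\Ec_0}$, and hence $S$, satisfiable --- contradicting the hypothesis. Thus the saturated alternative is impossible and the derivation extends to a refutation.

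I would flag the filter case as the main obstacle. Establishing $\I(\filter(p,s))=\set{e\in\I(s)\mid \str A\models p(e)}$ requires that $\str A$ genuinely decides each relevant predicate application, which is exactly what Condition~\ref{condI} (no set terms inside $p$, so $p(e)$ is a pure element formula) together with Conditions~\ref{condII}--\ref{condIII} provide, and that saturation of \ruleSetFilterUp has already split on $p(e)$ for every $e\sqin s$. The delicate bookkeeping --- that every such claim is well defined modulo the \ruleEIdent arrangement and the congruence closures $\Scclosed$ and $\Ec^*$ --- is the part of the proof that must be handled with care.
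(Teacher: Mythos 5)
Your proposal is correct and follows essentially the same route as the paper: termination plus progressiveness yields a saturated leaf, and the core is the same model-existence claim for saturated configurations (the paper's Lemma~\ref{prop:setModel}), with the same construction \(\I(s)=\{\I(e)\mid e\sqin s\in\Scclosed\}\) over a model of \(\Ec\) obtained from the element oracle, and the same saturation-based case analysis (up/down rules for the operators, \ruleSetUnsat for negative memberships, \ruleSetDiseq for disequalities, and Condition~\ref{condI} with the filter rules for \(\filter\)). The only minor difference is the final transfer step: you propagate satisfiability from the saturated leaf back to the root via the equisatisfiability of the rules (\cref{lem:setFilterSound}), whereas the paper simply observes that \(S\subseteq\Sc_\mathrm{l}\cup\Ec_\mathrm{l}\) because rules only add constraints, so the leaf's model already satisfies \(S\); both arguments are valid.
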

\begin{proof}
  We prove the contrapositive.
  Suppose \(S\) has a derivation \(D\) that cannot be extended to a refutation.
  Since the calculus is terminating (Lemma~\ref{lem:setFilterTerminating}),
  \(D\) must be extensible, by progressiveness, to a derivation that ends
  with a tree with a saturated leaf \(\conf{\Sc_\mathrm{l}, \Ec_\mathrm{l}}\).
  By Lemma~\ref{prop:setModel} (below),
  \(\Sc_\mathrm{l} \cup \Ec_\mathrm{l}\) is then satisfiable in \(\relationsTheory\).
  The satisfiability of \(S\) in \(\relationsTheory\) follows from that fact
  that \(S \subseteq \Sc_\mathrm{l} \cup \Ec_\mathrm{l}\), which can be proven 
  by structural induction on derivation trees.
\end{proof}

The main step in the proof of the proposition above relies on the following result
where
\(\vars(A)\) denotes all the variables occurring in some expression in the set \(A\).

\begin{restatable}{lemma}{prop:setModel}
  \label{prop:setModel}
  Let \(\Sc_0 \cup \Ec_0\) be a set of \(\filterDecidable\)-constraints 
  where \(\Sc_0\) and \(\Ec_0\) contain set and element constraints respectively.
  Suppose \(D\) is a finite derivation of \(\conf{\Sc_0, \Ec_0}\).
  If its final tree has a saturated leaf 
  \(C_\mathrm{l} = \conf{\Sc_\mathrm{l}, \Ec_\mathrm{l}}\),
  then there exists a model \(\I\) of \(\relationsTheory\) 
  that satisfies \(\Sc_\mathrm{l} \cup \Ec_\mathrm{l}\) and
  has the following properties:
  \begin{enumerate}
    \item \label{prop:one}
     For all \(x, y \in \vars(\Sc_\mathrm{l}) \cup \vars(\Ec_\mathrm{l})\) 
     of element sort, \\
     \(\I(x) = \I(y)\) iff \(x \teq y \in \Ec_\mathrm{l}^*\).
    \item \label{prop:two} 
     For all \(s \in \vars(\Sc_\mathrm{l})\) of set sort, \\
     \(\I(s) = \{\I(x) \mid x \sqin s \in \Sc_\mathrm{l}^*\}\).
  \end{enumerate}
\end{restatable}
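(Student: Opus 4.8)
The plan is to build the required model $\I$ directly from the saturated leaf $C_\mathrm{l}=\conf{\Sc_\mathrm{l},\Ec_\mathrm{l}}$, taking property~\ref{prop:two} as the \emph{definition} of $\I$ on set variables and then verifying that the resulting interpretation satisfies every literal in $\Sc_\mathrm{l}\cup\Ec_\mathrm{l}$. First I would fix the element part of $\I$. Since $C_\mathrm{l}\neq\unsat$, rule \ruleEConf is not applicable, so $\Ec_\mathrm{l}$ is satisfiable and the element oracle yields an interpretation of the element variables and the element theory that satisfies $\Ec_\mathrm{l}$; this discharges all element literals, including the predicate applications $p(e)$ and $\neg p(e)$ emitted by \ruleSetFilterUp and \ruleSetFilterDown. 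To secure property~\ref{prop:one} I would use saturation under \ruleEIdent: for every pair of element terms $e_1,e_2$ occurring in the set constraints that rule is redundant, so exactly one of $e_1\teq e_2$, $e_1\tneq e_2$ lies in $\Ec_\mathrm{l}^*$, and since \ruleEIdent records the literal in $\Sc_\mathrm{l}$ as well, also in $\Sc_\mathrm{l}^*$. Hence the oracle model realizes precisely the equalities in $\Ec_\mathrm{l}^*$. I would then set $\I(s)=\{\I(x)\mid x\sqin s\in\Sc_\mathrm{l}^*\}$ for each set variable $s$; each such set is finite because $\Sc_\mathrm{l}^*$ is finite, so $\I$ is a genuine $\relationsTheory$-interpretation once the operators are read standardly.

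The heart of the argument is a \emph{representation lemma}: for every set term $w$ occurring in $\Sc_\mathrm{l}$, not just for variables, $\I(w)=\{\I(x)\mid x\sqin w\in\Sc_\mathrm{l}^*\}$. For variables this is the definition; for a compound term I would argue by cases on its top operator, pairing the relevant up and down rules. For $w=s\sqcap t$ the value $\I(s)\cap\I(t)$ equals the right-hand side because \ruleInterDown (applied to every $x\sqin s\sqcap t\in\Sc_\mathrm{l}^*$) gives one inclusion and \ruleInterUp (redundant at the leaf) gives the other; the cases $\sqcup$, $\setminus$, singleton, and $\product$ are analogous via \ruleUnionUp/\ruleUnionDown, \ruleDifferenceUp/\ruleDifferenceDown, \ruleSingleUp/\ruleSingleDown, and \ruleProductUp/\ruleProductDown, the last relying on the normalization that every tuple term equals an explicit $\tup{x_0,\ldots,x_k}$. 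The filter case $w=\filter(p,s)$ is where the element and set sides interact: for each $e\sqin\filter(p,s)\in\Sc_\mathrm{l}^*$, \ruleSetFilterDown forces $p(e)\in\Ec_\mathrm{l}$ and $e\sqin s\in\Sc_\mathrm{l}$, giving one inclusion; conversely, if $a\in\I(s)$ with $\I(p)(a)$ true, then $a=\I(e)$ for some $e\sqin s\in\Sc_\mathrm{l}^*$, and \ruleSetFilterUp must have classified $e$, with the branch $\neg p(e)$ excluded since $\I\models\Ec_\mathrm{l}$, so $e\sqin\filter(p,s)\in\Sc_\mathrm{l}^*$.

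With the representation lemma in hand I would verify the four kinds of set literals. A positive membership $x\sqin w\in\Sc_\mathrm{l}$ holds since $x\sqin w\in\Sc_\mathrm{l}^*$ puts $\I(x)$ in $\I(w)$ by the lemma. For a negative membership $x\not\sqin w$, if $\I(x)\in\I(w)$ then some $z\sqin w\in\Sc_\mathrm{l}^*$ has $\I(z)=\I(x)$; property~\ref{prop:one} forces $x\teq z$, which by saturation under \ruleEIdent is recorded in $\Sc_\mathrm{l}^*$, so the membership-propagation clause of $\Sc_\mathrm{l}^*$ yields $x\sqin w\in\Sc_\mathrm{l}^*$ and \ruleSetUnsat would fire, contradicting saturation. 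Set equalities $s\teq w\in\Sc_\mathrm{l}$ hold because that same propagation clause makes the two membership sets coincide, and disequalities $s\tneq w$ hold because \ruleSetDiseq has introduced a fresh witness $z$ separating $s$ from $w$, whose memberships are decided as above. Combined with the already-satisfied element literals, this shows $\I\models\Sc_\mathrm{l}\cup\Ec_\mathrm{l}$.

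The most delicate point, and the one I expect to cost the most care, is the interface between the two closures $\Sc_\mathrm{l}^*$ and $\Ec_\mathrm{l}^*$: the model is built out of element values, yet all set-membership reasoning is carried out syntactically in $\Sc_\mathrm{l}^*$, so every model-level identity $\I(x)=\I(z)$ must be mirrored by a syntactic equality usable for congruence in $\Sc_\mathrm{l}^*$. Saturation under \ruleEIdent is exactly what supplies this, but only for element terms that actually occur in the set constraints, so the bookkeeping must track that the predicate applications produced by the filter rules introduce no new element terms (Condition~\ref{condI}), which is also what makes saturation reachable in the first place. Once this closure bookkeeping is pinned down, the individual operator cases of the representation lemma are routine.
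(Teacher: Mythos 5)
Your proposal is correct and takes essentially the same approach as the paper's proof: fix an element model via \ruleEConf and \ruleEIdent saturation (giving Property~\ref{prop:one}), interpret each set variable as its recorded member set $\{\I(x) \mid x \sqin s \in \Sc_\mathrm{l}^*\}$, verify positive constraints by pairing the up/down rules (with the filter case bridging $\Sc_\mathrm{l}^*$ and $\Ec_\mathrm{l}$ exactly as the paper does), and kill negative memberships and disequalities by contradiction with \ruleSetUnsat, \ruleEqUnsat, and \ruleSetDiseq saturation. The only difference is organizational --- you package the verification as a representation lemma over all (flat) set terms where the paper runs an induction on the ordering $\succ_\mathrm{T}$ induced by oriented set equalities --- but since terms are flat, both reduce to the same one-level case analysis.
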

\begin{proof}[Proof sketch]

  For simplicity, 
  we consider only initial configurations
  \(\conf{\Sc_0, \Ec_0}\) that
  contain no variables of sorts of the form \(\tupleSort(\alpha_1, \dots, \alpha_n)\)
  since any such variable can be replaced by a tuple \(\tup{x_1, \dots, x_n}\) 
  where each \(x_i\) is a variable of sort \(\alpha_i\).
  We assume, with no loss of generality, that all set equalities in
  \(\Sc_\mathrm{l}\) have the form \(x \teq t\) where \(x\) is a variable 
  that never occurs in the right-hand side of an equality of \(\Sc_\mathrm{l}\).
  This restriction allows us to define a well-founded ordering 
  \(\succ_\mathrm{T}\) over the set variables in \(\Sc_\mathrm{l}\)
  where \(x \succ_\mathrm{T} y\) iff \(x \teq t \in \Sc_\mathrm{l}\) and
  \(y\) occurs in \(t\). 
  

  Since the leaf \(\conf{\Sc_\mathrm{l}, \Ec_\mathrm{l}}\) is saturated, 
  rules \ruleEIdent, \ruleSetFilterDown and \ruleSetFilterUp do not apply.
  This implies all equalities between elements and
  predicates generated by the filter rules
  have been propagated to component \(\Ec_\mathrm{l}\).
  Since rule \ruleEConf does not apply either, we can conclude that
  \(\Ec_\mathrm{l}\) is satisfiable in \(\relationsTheory'\).
  Let \(\I\) be any interpretation that satisfies \(\Ec_\mathrm{l}\).
  It is possible to show that \(\I\) satisfies Property~(\ref{prop:one}). 
  
  We modify \(\I\) so that it assigns to each set variable \(s\)
  in \(\Sc_\mathrm{l}\) the value \(\{\I(e) \mid e \sqin s \in \Sc_\mathrm{l}^*\}\).
  Since \(\Ec_\mathrm{l}\) contains no set variables, \(\I\) continues to satisfy
  \(\Ec_\mathrm{l}\).
  We argue by induction on \(\succ_\mathrm{T}\) 
  that \(\I\) satisfies every constraint \(c\) in \(\Sc_\mathrm{l}\) as well.
  %
  %
  %
  
  If \(c\) is a membership constraint,
  we know \(c\) has the form \(e \sqin s\) where \(e\) and \(s\) are variables.
  Then we have that \(\I(e) \in \I(s)\) by construction of \(\I\)
  because \(\Sc_\mathrm{l} \subseteq \Sc_\mathrm{l}^*\).
  If \(c\) is an equality, we know \(c\) has 
  either the form \(e_1 \teq e_2\) where \(e_1\) is an element variable 
  or the form \(s \teq t\) where \(s\) is a set variable.
  In the first case, thanks to rule \ruleEIdent and saturation
  \(e_1 \teq e_2\) is in \(\Ec_\mathrm{l}\) and so \(e_1 \teq e_2\) is satisfied
  by \(\I\) by Property~(\ref{prop:one}).
  In the second case, we have the following cases for \(t\):
  %
    
    1)
    \(s \teq t\) where \(t\) is a variable.
          We have \(\I(s) = \I(t)\) because then, for each element variable \(e\),
          \(e \sqin s \in \Sc_\mathrm{l}^*\) iff \(e \sqin t \in \Sc_\mathrm{l}^*\) 
          by the definition of \(\Sc_\mathrm{l}^*\).
          %
    
    2)
    \(s \teq \filter(p, t)\) where \(s\) and \(t\) are set variables.
          Observe that \(s \succ_\mathrm{T} t\) and let
          \(A = \{a \mid a \in \I(t), \I(p)(a) = \mathit{true}\}\).
          By the semantics of \(\sigma\),
          we need to show that \(\I(s) = A\).          
          Suppose \( a \in \I(s)\). 
          By the definition of \(\I\), we have 
          \(a = \I(e)\) for some  \(e\) with \(e \sqin s \in \Sc_\mathrm{l}^*\). 
          By the definition of \(\Sc_\mathrm{l}^*\), we have \(e \sqin \filter(p, t) \in \Sc_\mathrm{l}^*\). 
          Now \(e \sqin t \in \Sc_\mathrm{l}, p(e) \in \Ec_\mathrm{l}\) because of saturation wrt rule \ruleSetFilterDown. 
          By induction on \(t\) and the fact that \(\I\) satisfies \(\Ec_\mathrm{l}\), we have
           \( a  \in \I(t)\) and \(\I(p)(a)\) is true. 
          Therefore \(x \in A\). 
          For the other direction, suppose \(a \in A\). 
          That means \( a \in \I(t)\) and  \(\I(p)(a)\) is true. 
          From the induction hypothesis on \(t\), we have
          \( a = \I(e) \) for some \(e \sqin t \in \Sc_\mathrm{l}^*\). 
          By saturation wrt rule \ruleSetFilterUp, \(\Ec_\mathrm{l}\)
          must contain one of the rule's conclusions.
          The second one cannot be, 
          since \(\neg p(e) \in \Ec_\mathrm{l}\) implies \(\neg \I(p(e))\) 
          which means \(\I(p)(a)\) is false, a contradiction.
          So \(\Ec_\mathrm{l}\) must contain the first conclusion 
          of \ruleSetFilterUp.
          But then we have 
          \(p(e) \in \Ec_\mathrm{l}, e \sqin \filter(p, t) \in \Sc_\mathrm{l} \). 
          This means that \(a = \I(e), e \sqin s \in \Sc_\mathrm{l}^*\). 
          From the definition of \(\I(s)\), we have \(a \in \I(s)\).
          %

3)
  The remaining cases \(s \teq t \sqcup u, s \teq t \sqcap u, s \teq t \setminus u\) and \(s \teq t \product u\) 
  are proved by induction in a similar fashion. 

  We now argue that \(\I\) satisfies
  negative membership and disequality constraints.
  Let \(e \not \sqin s \in \Sc_\mathrm{l}\) and, by contradiction, suppose
  that \( \I(e) \in \I(s)\).
  Then, by construction of \(\I\), it must be the case that \(e \sqin s \in \Sc_\mathrm{l}^* \).
  But then rule \ruleSetUnsat appliest
  which contradicts our assumption that 
  \(\conf{\Sc_\mathrm{l},_\mathrm{l} \Ec_\mathrm{l}}\) is saturated.

  For constraints of the form \(s \tneq t \in \Sc_\mathrm{l}^*\),
  \(s\) and \(t\) could be either element terms or set terms.
  In the first case, suppose that \(\I(s) = \I(t)\),
  then it must be that \(s \teq t \in \Sc_\mathrm{l}^*\) 
  which makes \ruleEqUnsat applicable, again against the assumption that
  our leaf is saturated.
  If the second case, since rule \ruleSetDiseq does not apply 
  because the leaf is saturated, it must be that one of its conclusions,
  \(z \sqin s, z \not\sqin t\) or \(z \sqin t, z \not\sqin s\),
  is in \(\Sc_\mathrm{l}\).
  But then \(\I(z)\) cannot be in both \(\I(s)\) and \(\I(t)\), 
  which means that \(\I(s) \neq \I(t)\).
\end{proof}

%

\begin{restatable}{corollary}{cor:product}
  \label{cor:product}
  The satisfiability in \(\relationsTheory\) of \(\filterDecidable\)-constraints is decidable.
\end{restatable}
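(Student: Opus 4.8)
The plan is to assemble the decision procedure directly from the termination, soundness, and completeness results already proved for $\filterDecidable$-constraints. Given a finite set $S$ of $\filterDecidable$-constraints, write $S = \Sc_0 \cup \Ec_0$ with $\Sc_0$ the relation constraints and $\Ec_0$ the element constraints, and run \emph{any} progressive derivation strategy starting from the singleton tree with root $\conf{\Sc_0, \Ec_0}$. By Proposition~\ref{lem:setFilterTerminating}, under Conditions~\ref{condI}--\ref{condIII} the calculus is terminating, so this strategy halts after finitely many rule applications. By the definition of progressiveness, it halts either with a closed tree or with a tree containing a saturated leaf; these two outcomes are exhaustive and can serve as the two answers (\unsat and \sat) of the procedure.

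Next I would argue that each answer is correct. If the procedure halts with a closed tree, then $S$ has a refutation, and Proposition~\ref{prop:setFilterSound} (soundness) yields that $S$ is unsatisfiable in $\relationsTheory$. If instead it halts with a tree that is \emph{not} closed, then by progressiveness that tree has a saturated leaf $\conf{\Sc_\mathrm{l}, \Ec_\mathrm{l}}$, and the derivation cannot be extended to a refutation; the contrapositive of Proposition~\ref{lem:setFilterComplete} (completeness) then gives that $S$ is satisfiable. Equivalently, and more constructively, Lemma~\ref{prop:setModel} exhibits a model $\I$ of $\relationsTheory$ satisfying $\Sc_\mathrm{l} \cup \Ec_\mathrm{l}$, and since $S \subseteq \Sc_\mathrm{l} \cup \Ec_\mathrm{l}$, that same $\I$ witnesses the satisfiability of $S$.

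The step I expect to be the real content, rather than the soundness/completeness bookkeeping, is \emph{effectiveness}: that a progressive strategy can actually be executed as an algorithm. This amounts to showing that at each configuration it is decidable whether some rule applies non-redundantly, and that the associated checks are computable. The first point was already observed in the termination discussion: $\Sc^*$ and $\Ec^*$ are computable from $\Sc$ and $\Ec$ by congruence closure extended with the tuple-projection deductions, so guard conditions such as $x \sqin s \in \Sc^*$ or $\filter(p,s) \in \ter{\Sc}$ reduce to finite membership tests, and the fresh-variable convention prevents the same premises from being reused to manufacture new variables. The second point concerns rule \ruleEConf, the only rule invoking the element oracle: Conditions~\ref{condII} and~\ref{condIII} guarantee that satisfiability of $\Ec$ remains decidable even after augmentation by the predicate applications $p(e)$ and $\neg p(e)$ emitted by \ruleSetFilterUp and \ruleSetFilterDown, and here we also rely on the computability of the filter predicates. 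Since termination bounds the number of rule applications and each application (together with the test for saturation) is effective, the overall procedure is a terminating algorithm that answers \unsat or \sat correctly, establishing decidability.
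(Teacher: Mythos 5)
Your proposal is correct and matches the paper's intent exactly: the corollary is stated there without an explicit proof precisely because it is the assembly you describe---termination (Proposition~\ref{lem:setFilterTerminating}) plus refutation soundness (Proposition~\ref{prop:setFilterSound}) plus completeness via the saturated-leaf model construction (Proposition~\ref{lem:setFilterComplete} and Lemma~\ref{prop:setModel}). Your added effectiveness discussion (decidability of rule applicability, computability of $\Sc^*$ and $\Ec^*$, and the role of Conditions~\ref{condII} and~\ref{condIII} for the element oracle) is likewise the same argument the paper makes at the start of its termination subsection, so nothing is missing.
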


%

\subsection{Undecidability with Unrestricted Filter Predicates}
\label{sec:undecidable_set_fragments}

One may wonder if the restriction expressed by Condition~\ref{condI}
is really necessary for the decidability result in the previous section.
The answer is that some restriction on the predicates passed to filter
is indeed needed because otherwise decidability is lost.
We show that in this section at a high level by a 
sketching a reduction 
from a known undecidable problem.
To start, we show that having unrestricted filter predicates
allows us to define a set map operator \(\map\),
which takes a function \(f\) of rank \(\alpha \to \beta\) and
a finite set \(t\) of sort \(\s(\alpha)\), and
returns the image of \(t\) under \(f\).
We can do that in sets of constraints
by replacing every term of the form \(\map(f, t)\)
by a fresh variable \(s\) and adding the two constraints:
\[
  \begin{array}{l@{~}l@{~}l}
    t & \teq & \compr{x}{x \sqin t \land f(x) \sqin s} \\[0ex]  
    s & \teq & \compr{y}{y \sqin s \land \compr{x}{x \sqin t \wedge f(x) \teq y} \tneq \sempty}
  \end{array}
\]
i.e., in desugared notation:
{\small
\[
    t \teq \filter(\lambda x.\, f(x) \sqin s,  t) \qquad
    s \teq \filter(\lambda y.\, \filter(\lambda x.\, y \teq f(x), t) \tneq \sempty, s)
\]
}

%
%
The first constraint ensures that each element
in \(t\) has an image under \(f\) in \(s\),
whereas the second ensures that
each element in \(s\) has a preimage under \(f\)
in \(t\).
Next, we show the satisfiability of constraints with map terms
is undecidable.
%

%
The necessity of Condition~\ref{condII} and~\ref{condIII} for decidability
should be clear, so here we focus on the necessity of Condition~\ref{condI}.
We consider a particular element theory \(\elemsTheory\)
that satisfies conditions~\ref{condII} and~\ref{condIII},
and demonstrate the undecidability of \(\relationsTheory'\) 
when Condition~\ref{condI} is dropped.
Specifically, we consider as \(\elemsTheory\)
the combination of linear integer arithmetic
(\(\lian\)) and the theory of pairs, i.e., tuples of two elements.

The decidability of the satisfiability of quantifier-free formulas
in this combined theory can be proven by standard theory combination results~\cite{nelson}.
It is well-known that (\(\lian\)) has a decidable quantifier-free satisfiability problem and
is stably infinite\footnote{%
  A \(\Sigma\)-theory \(T\) is \define{stably infinite}
  over a sort \(\alpha\) in \(\Sigma\)
  if every quantifier-free \(\Sigma\)-formula satisfiable in \(T\) is satisfiable
  in a model of \(T\) that interprets \(\alpha\) as an infinite set.
}.
The same is true for the theory of pairs, which is stably infinite over each
of its element sorts.
This is a consequence of more general results about the theory
of algebraic datatypes~\cite{datatypes,selectorsPaper}.
As a result, the combination of these two theories,
the theory of integers and integer pairs with no multiplication,
has a decidable quantifier-free satisfiability problem.
We will use this combined theory
as the element theory for our theory of sets and show that
the satisfiability of set constraints containing applications of \(\map\)
to functions from \(\lian\) and pairs is undecidable.
We do that by reducing Hilbert's tenth problem to formulas
in a language we call \(\sigmap\) that
only accepts integers and integer pairs as elements, and only allows arithmetic operators inside
filter predicates.\footnote{%
Note for reviewers:
the full grammar for \(\sigmap\) is in the appendix. 
}
For Hilbert's tenth problem, it is enough to only consider the system of equations described in~\cite{undecidable_hilbert_set}:
\(  x \teq y ,  x \teq y + z,   x  \teq y \cdot z,  x  \teq k \),
where \(x, y , z\) are nonnegative integers and \(k\) is a constant.
Let \(\sigHilbert\) denote the language of these equations.
The syntax of \(\sigmap\) allows us to define \(x, y, z, k\) as elements (integers),
and use arithmetic expressions in the first argument of map terms, but it does not
allow us to use addition and multiplication between element variables.
Therefore, we need to encode them with other operators in \(\sigmap\).

We define for each variable \(x\) a set \(x'\) that is the singleton containing \(x\).
%
Then we add the two constraints: \(x' \teq \pi(\lambda n. \ite(n \geq 0, n, - n), x')\)
and \(x' \teq [x] \).
Any model \(\I\) that satisfies the first constraint ensures that \(x'\) only contains nonnegative integers, and
the second constraint ensures that \(x'\) is interpreted as the singleton \(\{\I(x)\}\).
For constraints in \(\sigHilbert\) of the form \(x \teq y\) and
\(x \teq k\), we add the constraints
\(x' \teq y'\) and \(x' = [k]\) in \(\sigmap\) respectively.
For each constraint in \(\sigHilbert\) of the form \(x \teq y + z\)
we add the constraint from \(\sigmap\):
\(x' \teq \pi(\lambda n. y + n, z')\).


For each constraint \(x \teq y \cdot z\) in \(\sigHilbert\) 
we add an integer variable \(v_{y,z}\), a set variable \(p_{yz}\),
and a clause
\((y \teq 0 \wedge x' \teq [0]) \vee 
  (z \teq 0 \wedge x' \teq [0]) \vee 
  (y \tneq 0 \wedge z \tneq 0 \wedge \varphi)\)
in \(\sigmap\)
where \(\varphi\) is the conjunction of the following constraints:
\vspace{-1.5ex}

{
\small
\begin{align*}
  x' & \teq [v_{y,z}] \qquad 
  [v_{y,z}] \teq \pi(\lambda (m,n). ite(m \teq 1, n, v_{y,z}) , p_{yz})
  \\
  p_{yz}  & \teq [(y,z)]  
  \sqcup 
  \pi(\lambda (a,b).\, \ite(a \teq 1, (a,b), (a-1, b + z)), p_{yz})
\end{align*}
}
Now, any model that satisfies the system of equations in \(\sigHilbert\) defines a model that satisfies the corresponding constraints
in \(\sigmap\) by interpreting variables as follows:
\vspace{-1.5ex}

{\small
\begin{align*}  
  \I(x')  = \{\I(x) \}
  \qquad\qquad
  \I(v_{y,z}) = \I(y) \cdot \I(z)
  \\
  \{(\I(y), \I(z)), (\I(y)-1, 2 \cdot \I(z)),\dots, (1, \I(y) \cdot \I(z))\}
  \subseteq \I(p_{yz})
\end{align*}
}

Note that \(\I(x)\) must be nonnegative due to our constraints.
Due to the undecidability of Hilbert's tenth problem~\cite{hilbert},
we can conclude that the satisfiability in \(\relationsTheory\)
of formulas in the fragment \(\sigmap\) is undecidable.

%
%
\section{Bounded Set Quantifiers}\label{sec:all_some}
Applications that can encode their problems as formulas in \(\relationsTheory\) 
typically require the use of quantifiers.
However, in most cases all quantified variables are \emph{bounded}
in the sense that are constrained to range over a specific finite set.

Now, the theory \(\relationsTheory\) has the nice property of being able 
to express bounded quantification at the quantifier-free level.
In fact, since formulas in this theory are just boolean terms,
every formula of the form 
\(\exists x.\, (x \sqin s \land \varphi)\)
can be expressed equivalently as the constraint
\(\filter(\lambda x.\, \varphi, s) \tneq \sempty\).
Similarly, every formula of the form 
\(\forall x.\, (x \sqin s \Rightarrow \varphi)\)
can be expressed equivalently as the constraint
\(\filter(\lambda x.\, \varphi, s) \teq s\).

Since the SMT solver cvc5 already supports an extension of \(\relationsTheory\),
we extended its language further with the set-bounded quantifier operators
\(\setSome\) and \(\setAll\)
defined internally as:
\begin{align*}
\setSome(p, s) \equiv  \filter(p, s) \tneq \sempty \qquad 
\setAll(p, s)  \equiv  \filter(p, s) \teq s
\end{align*}
for all predicates \(p\) of rank \(\alpha \to \bool\) and 
sets \(s\) of sort \(\s(\alpha)\).

Referring back to the decidable fragment \(\filterDecidable\)
defined in Section~\ref{sec:decidable}
notice that formulas of the form
\(\setAll(\dots(\setAll(p, s_n)) \dots, s_1)\)
fall outside that fragment.
However, since the theory \(\relationsTheory\) supports the cross-product operator
among sets (yielding a set of tuples) it is possible to rewrite
nested set-bounded quantifiers that are all universal into constraints of the form  
\(\setAll(p, s_1 \product \dots \product s_n)\)
where \(p\) ranges over tuples.
Such constraints do fall in \(\filterDecidable\),
which means that we can effectively express in \(\filterDecidable\), 
and hence decide, universal formulas with set-bounded quantifiers. 
This result is in line with the decidability of the (unsorted) logic \universals, 
which only allows set-bounded universal quantifiers~\cite{universals}.

Regardless of decidability considerations, the ability to express bounded
quantification in an SMT solver without actually using standard quantifiers
is rather enticing since the performance of SMT solvers is notoriously fickle 
in the presence of quantifiers in input formulas.

We investigated the potential of set-bounded quantifiers 
with an experimental evaluation 
that considers formulas generated by the \sleec tool.
We discuss this investigation next.

%
%
\section{Implementation and Experimental Evaluation}\label{sec:sleec}
\sleec is a formal language for writing Social, Legal, Ethical,
Empathetic, and Cultural requirements~\cite{sleec}.
It aims to formalize requirements in such domains as customer service, healthcare,
and education.
It provides an intuitive high-level language for writing requirements, and
uses automated tools to identify conflicts, redundancies, and concerns.
\sleec is an event-based rule language. 
For example, a simple \sleec rule ``\textbf{when} $A$ \textbf{then} $B$ \textbf{within} $30$ seconds''
specifies the constraint that whenever 
an event of type $A$ occurs, there must 
be some occurrence of event $B$ within 30 seconds
 from the time of $A$'s occurrence. 
 $A$ is called the \define{rule trigger} and $B$ the \define{response}.

The tool LEGOS-\sleec translates requirements into the logic FOL*
of quantified formulas over relational objects, where a relational object with
$n$ attributes of a relational class
$C$ represents an
$n$-ary tuple in a set $s_{C}$~\cite{sleecEncoding}. 
The tool checks the satisfiability of FOL* formulas using LEGOS, 
a custom-built bounded satisfiability checker~\cite{legos}.
LEGOS incrementally expands the quantified formula within a fixed domain 
of relational objects,
generating quantifier-free over- and under-approximations of the formula. 
The satisfiability of these approximations is determined using the SMT
solver z3.
When the over-approximation is satisfiable but the under-approximation
is unsatisfiable, LEGOS computes a minimal correction set for the
under-approximation based on the difference between the two
approximations. 
This correction set is used to incrementally
expand the domain of the bounded variables, 
enabling a new iteration of the analysis.

Row 0 in~\cref{fig:sleec} shows the performance of this custom algorithm 
over a large set of \sleec benchmarks~\cite{zenodo}.
We consider it as the baseline for comparison with 
the quantified approaches described next.
%
%
\tiny
\begin{table}[t]
  \begin{center}
    \caption{\sleec performance (20s \timeout).}
    \label{fig:sleec}
    \begin{tabular}{lrrr}
      \toprule
      Technique                            & \sat          & \unsat       &  \unknown \\
      \midrule
      0 LEGOS (z3)     & 1204          & 158          & 0                      \\
      \midrule
      1 \(\forall / \exists\) pred (z3)     & 282           & \textbf{158} & 922                     \\
      2 \(\forall / \exists\) pred-mbqi (\cvc)   & 249           & 73           & 1040                    \\
      3  \(\forall / \exists\) sets-enum (\cvc)   & 0             & 147          & 1215                    \\
      4  \(\forall / \exists\) sets-fmf  (\cvc)   & 486           & 1            & 875                     \\
      5 set.all/set.some                 (\cvc)  & \textbf{1189} & 138          & \textbf{35}                      \\
      \bottomrule
    \end{tabular}
  \end{center}
\end{table}
\normalsize
The authors of LEGOS aimed to simplify their
FOL* satisfiability checking algorithm for
simplicity and maintainability by directly
using quantifiers in first-order logic (FOL).
Fortunately, FOL* formulas can be translated
into FOL by mapping relational classes to
either sets or uninterpreted predicates.
Specifically, a class $C$ of relational objects with $n$
attributes in FOL* can be mapped to a set $s_C$
of $n$-arity tuples
or to an n-arity predicate $p_C$.
Quantifiers over relational objects of class $C$
can be converted into quantifiers over either
elements of $s_C$ or tuples that satisfy $p_C$.

The remaining five rows in~\cref{fig:sleec} show the results of
different FOL encodings with quantifiers.
Rows 1 and 2 encode sets as predicates and show the performance in z3 
and \cvc. 
By default, z3 enables model-based quantifier instantiation technique (mbqi)
and we enabled it for \cvc~\cite{mbqi} as well.
z3 managed to solve all \unsat benchmarks, and was overall superior to \cvc,
but it struggled with \sat benchmarks.
Rows 3 and 4 correspond to an encoding in the theory of finite sets and relation 
of \cvc, along with standard quantifiers.
Row 3 uses enumerative instantiation~\cite{enumerativeInstantiation} 
to solve more \unsat benchmarks,
whereas finite model finding~\cite{fmf,findModelFind} is used to solve 
more \sat benchmarks in Row 4.
The last row shows the performance of set-bounded quantifiers
using the filter operator \(\filter\) as explained in~\cref{sec:all_some}.
This approach is the clear winner on \sat benchmarks among
the quantified approaches and is very competitive on \unsat benchmarks.
Our approach failed to solve 35 \unsat benchmarks, and 
upon investigation, we found that the solver keeps generating 
new element terms because of the presence of nested and alternating 
\(\setAll\) and \(\setSome\) quantifiers. 
Although this falls outside the decidable fragment, 
this limitation occurs in realistic benchmarks, so we plan to address it
in future research. 

In general, the results imply that using set-bounded quantifiers is more effective
than using standard quantifiers in \sat benchmarks. 
Figure~\ref{fig:cactus} shows the number of benchmarks solved by each technique.
All experiments were run on a Linux machine with 16-Core AMD EPYC 7313 
processor. 
We used z3 version v4.13.4 and \cvc version 1.2.0 (commit f3fc80e). 
All benchmarks used in this paper are publicly available.
Although our approach could not solve 35 benchmarks compared to LEGOS,
it took less time to solve the rest compared to LEGOS.
Our intuition here is that LEGOS takes more time because 
it may need multiple abstraction refinement rounds and so multiple calls
to z3 before coming to a conclusion
whereas our approach requires only one call to \cvc. 
It is worth mentioning that developing the encodings to \cvc for \sleec
revealed a few soundness bugs in \sleec's FOL* encoding.
For example, one bug caused by an improper handling of name
collisions in quantified formulas was
identified through discrepancies in
satisfiability results compared to \cvc.
%
This led to multiple rounds of bug fixes in both \sleec and \cvc.
\begin{figure}
  \begin{center}
    \includegraphics[scale=0.45]{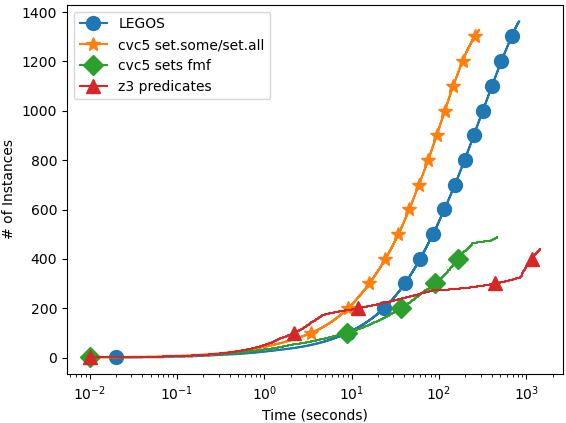}
  \end{center}
  \caption{Cactus plot showing the number of instances solved
    by each technique.}
  \label{fig:cactus}
  \vspace{-3ex}
\end{figure}
%




%
%
\section{Conclusion and Future Work}
We proved that the satisfiability of qffs in the theory of finite relations 
with filter and cross product operators is decidable
when the satisfiability of qffs in the element theory is decidable and 
no set terms are used in filter predicates. 
We also proved that allowing set terms in filter predicates does compromise
decidability.
We demonstrated how to express set-bounded quantifiers using filter terms,
and showed that they are superior to standard quantifiers for SMT solving
with a real set of benchmarks, especially with satisfiable formulas. 
%

%

Our calculus is incomplete 
when we add the cardinality operator 
along with its rules in~\cite{sets}.
%
%
We leave addressing this problem to future work. 

Multiset-bounded quantifiers were not discussed in this paper.
However, similarly to set quantifiers, they could be easily implemented
using the filter operator for multisets in~\cite{lparBags}.
%
This is also left to future work.

\bibliographystyle{splncs04}
\bibliography{main.bib}

\begin{thebibliography}{10}
\providecommand{\url}[1]{\texttt{#1}}
\providecommand{\urlprefix}{URL }
\providecommand{\doi}[1]{https://doi.org/#1}

\bibitem{sets}
Bansal, K., Barrett, C.W., Reynolds, A., Tinelli, C.: Reasoning with finite sets and cardinality constraints in \text{SMT}. Log. Methods Comput. Sci.  \textbf{14}(4) (2018). \doi{10.23638/LMCS-14(4:12)2018}, \url{https://doi.org/10.23638/LMCS-14(4:12)2018}

\bibitem{datatypes}
Barrett, C., Shikanian, I., Tinelli, C.: An abstract decision procedure for satisfiability in the theory of recursive data types. Electronic notes in theoretical computer science  \textbf{174}(8),  23--37 (2007). \doi{https://doi.org/10.1016/j.entcs.2006.11.037}

\bibitem{undecidable_hilbert_set}
Cantone, D., Cutello, V., Policriti, A.: Set-theoretic reductions of hilbert's tenth problem. In: B{\"o}rger, E., B{\"u}ning, H.K., Richter, M.M. (eds.) CSL '89. pp. 65--75. Springer Berlin Heidelberg, Berlin, Heidelberg (1990), \url{https://link.springer.com/chapter/10.1007/3-540-52753-2_32}

\bibitem{quantifiersUndecidable}
Cantone, D., Cutello, V., Policriti, A.: Set-theoretic reductions of \text{Hilbert}'s tenth problem. In: B{\"o}rger, E., B{\"u}ning, H.K., Richter, M.M. (eds.) CSL '89. pp. 65--75. Springer Berlin Heidelberg, Berlin, Heidelberg (1990)

\bibitem{survey}
Cantone, D.: A survey of computable set theory. Matematiche  \textbf{43}(1,2),  125--194 (1988), \url{https://lematematiche.dmi.unict.it/index.php/lematematiche/article/view/712}

\bibitem{cantoneDecidableProduct}
Cantone, D., Ursino, P.: Decidability of the satisfiability problem for boolean set theory with the unordered cartesian product operator. ACM Trans. Comput. Logic  \textbf{25}(1) (Jan 2024). \doi{10.1145/3626823}, \url{https://doi.org/10.1145/3626823}

\bibitem{universals}
Cantone, D., Zarba, C.G.: A tableau-based decision procedure for a fragment of set theory involving a restricted form of quantification. In: Murray, N.V. (ed.) Automated Reasoning with Analytic Tableaux and Related Methods. pp. 97--112. Springer Berlin Heidelberg, Berlin, Heidelberg (1999), \url{https://link.springer.com/chapter/10.1007/3-540-48754-9_12}

\bibitem{prolog}
Cristi\'{a}, M., Rossi, G.: A practical decision procedure for quantifier-free, decidable languages extended with restricted quantifiers. J. Autom. Reason.  \textbf{68}(4) (Oct 2024), \url{https://doi.org/10.1007/s10817-024-09713-6}

\bibitem{cedarPaper}
Cutler, J., Disselkoen, C., Eline, A., He, S., Headley, K., Hicks, M., Hietala, K., Ioannidis, E., Kastner, J., Mamat, A., McAdams, D., McCutchen, M., Rungta, N., Torlak, E., Wells, A.: Cedar: A new language for expressive, fast, safe, and analyzable authorization. In: OOPSLA 2024 (2024), \url{https://www.amazon.science/publications/cedar-a-new-language-for-expressive-fast-safe-and-analyzable-authorization}

\bibitem{legos}
Feng, N., Marsso, L., Sabetzadeh, M., Chechik, M.: Early verification of legal compliance via bounded satisfiability checking. In: Enea, C., Lal, A. (eds.) Computer Aided Verification. pp. 374--396. Springer Nature Switzerland, Cham (2023), \url{https://link.springer.com/chapter/10.1007/978-3-031-37709-9_18}

\bibitem{sleecEncoding}
Feng, N., Marsso, L., Yaman, S.G., Baatartogtokh, Y., Ayad, R., de~Mello, V.O., Townsend, B.A., Standen, I., Stefanakos, I., Imrie, C., Rodrigues, G.N., Cavalcanti, A., Calinescu, R., Chechik, M.: Analyzing and debugging normative requirements via satisfiability checking. In: Proceedings of the 46th {IEEE/ACM} International Conference on Software Engineering, {ICSE} 2024, Lisbon, Portugal, April 14-20, 2024. pp. 214:1--214:12. {ACM} (2024). \doi{10.1145/3597503.3639093}, \url{https://doi.org/10.1145/3597503.3639093}

\bibitem{sleec}
Feng, N., Marsso, L., Yaman, S.G., Townsend, B., Cavalcanti, A., Calinescu, R., Chechik, M.: Towards a formal framework for normative requirements elicitation. In: Proceedings of the 38th IEEE/ACM International Conference on Automated Software Engineering. p. 1776–1780. ASE '23, IEEE Press (2024), \url{https://doi.org/10.1109/ASE56229.2023.00152}

\bibitem{setMapVariablePaper}
Ferro, A., Omodeo, E.G., Schwartz, J.T.: Decision procedures for some fragments of set theory. In: Bibel, W., Kowalski, R. (eds.) 5th Conference on Automated Deduction Les Arcs, France, July 8--11, 1980. pp. 88--96. Springer Berlin Heidelberg, Berlin, Heidelberg (1980), \url{https://link.springer.com/chapter/10.1007/3-540-10009-1_8}

\bibitem{mbqi}
Ge, Y., de~Moura, L.M.: Complete instantiation for quantified formulas in satisfiabiliby modulo theories. In: Bouajjani, A., Maler, O. (eds.) Computer Aided Verification, 21st International Conference, {CAV} 2009, Grenoble, France, June 26 - July 2, 2009. Proceedings. Lecture Notes in Computer Science, vol.~5643, pp. 306--320. Springer (2009). \doi{10.1007/978-3-642-02658-4\_25}, \url{https://doi.org/10.1007/978-3-642-02658-4\_25}

\bibitem{alloyRef}
Jackson, D.: Software abstractions : logic, language, and analysis / daniel jackson. (2012), \url{https://mitpress.mit.edu/9780262528900/software-abstractions/}

\bibitem{hilbert}
Matijasevic, J.V.: Enumerable Sets are Diophantine, pp. 269--273. \url{https://www.worldscientific.com/doi/abs/10.1142/9789812564894_0013}

\bibitem{relationsPaper}
Meng, B., Reynolds, A., Tinelli, C., Barrett, C.W.: Relational constraint solving in {SMT}. In: de~Moura, L. (ed.) Automated Deduction - {CADE} 26 - 26th International Conference on Automated Deduction, Gothenburg, Sweden, August 6-11, 2017, Proceedings. Lecture Notes in Computer Science, vol. 10395, pp. 148--165. Springer (2017). \doi{10.1007/978-3-319-63046-5\_10}, \url{https://doi.org/10.1007/978-3-319-63046-5\_10}

\bibitem{zenodo}
Mohamed, M., Feng, N.: Benchmarks for bounded quantifiers for finite relations (Jan 2025), \url{https://doi.org/10.5281/zenodo.14774207}

\bibitem{lparBags}
Mohamed, M.M.Y., Reynolds, A., Tinelli, C., Barrett, C.: Verifying sql queries using theories of tables and relations. In: Bjørner, N., Heule, M., Voronkov, A. (eds.) Proceedings of 25th Conference on Logic for Programming, Artificial Intelligence and Reasoning. EPiC Series in Computing, vol.~100, pp. 445--463. EasyChair (2024). \doi{10.29007/rlt7}

\bibitem{nelson}
Nelson, G., Oppen, D.C.: Simplification by cooperating decision procedures. ACM Trans. Program. Lang. Syst.  \textbf{1}(2),  245–257 (Oct 1979). \doi{10.1145/357073.357079}, \url{https://doi.org/10.1145/357073.357079}

\bibitem{enumerativeInstantiation}
Reynolds, A., Barbosa, H., Fontaine, P.: Revisiting enumerative instantiation. In: Beyer, D., Huisman, M. (eds.) Tools and Algorithms for the Construction and Analysis of Systems - 24th International Conference, {TACAS} 2018, Held as Part of the European Joint Conferences on Theory and Practice of Software, {ETAPS} 2018, Thessaloniki, Greece, April 14-20, 2018, Proceedings, Part {II}. Lecture Notes in Computer Science, vol. 10806, pp. 112--131. Springer (2018). \doi{10.1007/978-3-319-89963-3\_7}, \url{https://doi.org/10.1007/978-3-319-89963-3\_7}

\bibitem{findModelFind}
Reynolds, A., Tinelli, C., Goel, A., Krstic, S.: Finite model finding in {SMT}. In: Sharygina, N., Veith, H. (eds.) Computer Aided Verification - 25th International Conference, {CAV} 2013, Saint Petersburg, Russia, July 13-19, 2013. Proceedings. Lecture Notes in Computer Science, vol.~8044, pp. 640--655. Springer (2013). \doi{10.1007/978-3-642-39799-8\_42}, \url{https://doi.org/10.1007/978-3-642-39799-8\_42}

\bibitem{fmf}
Reynolds, A., Tinelli, C., Goel, A., Krstic, S., Deters, M., Barrett, C.W.: Quantifier instantiation techniques for finite model finding in {SMT}. In: Bonacina, M.P. (ed.) Automated Deduction - {CADE-24} - 24th International Conference on Automated Deduction, Lake Placid, NY, USA, June 9-14, 2013. Proceedings. Lecture Notes in Computer Science, vol.~7898, pp. 377--391. Springer (2013). \doi{10.1007/978-3-642-38574-2\_26}, \url{https://doi.org/10.1007/978-3-642-38574-2\_26}

\bibitem{selectorsPaper}
Reynolds, A., Viswanathan, A., Barbosa, H., Tinelli, C., Barrett, C.W.: Datatypes with shared selectors. In: Galmiche, D., Schulz, S., Sebastiani, R. (eds.) Automated Reasoning - 9th International Joint Conference, {IJCAR} 2018, Held as Part of the Federated Logic Conference, FloC 2018, Oxford, UK, July 14-17, 2018, Proceedings. Lecture Notes in Computer Science, vol. 10900, pp. 591--608. Springer (2018), \url{https://doi.org/10.1007/978-3-319-94205-6\_39}

\end{thebibliography}
\onecolumn
\appendix
\subsection{Proof of \cref{lem:setsConstraints}}
\setsConstraints*
\begin{proof}
  The formula \(\varphi\) can be transformed into an equisatisfiable disjunctive normal form
  \(\varphi_1 \vee \dots \vee \varphi_n\) using standard techniques,
  where \(\varphi_i\) is a conjunction of \(\varphi_i^1, \dots, \varphi_i^{k_i}\)
  of literals for \(i \in [1,n]\).
  Each literal is either a relation constraint
  or an element constraint.
  For each \(i \in [1,n]\) define:
  \begin{align*}
    \Sc_i & = \{\varphi_i^j \ver \varphi_i^j \text{ is a relation constraint}\}       \\    
    \Ec_i & = \{\varphi_i^j \ver \varphi_i^j \text{ is an element constraint}\}
  \end{align*}
  where \(j \in [1, k_i]\) for each \(i \in [1,n]\).
  It is clear that \(\varphi\) is satisfiable if and only if
  \(\Sc_i \cup \Ec_i\) is satisfiable for some \(i\in [1,n]\).
\end{proof}
\subsection{Proof of \cref{lem:setFilterSound}}
\setFilterSound*
\begin{proof}
  The proof follows from the semantics of set operators and the definition of \(\Sc^*\) and \(\Ec^*\).
  We show the proof for rules \ruleSetFilterDown and \ruleSetFilterUp. The other cases are similar. 
  Suppose the premise configuration is \(c_1 = (\Sc_1, \Ec_1)\). 
  Rule \ruleSetFilterDown has only one conclusion, say 
  \(c_2 = (\Sc_1 \cup \{e \sqin s\}, \Ec_1 \cup p(e))\). 
  For this case, we prove a stronger claim that 
  any model satisfies \(c_1\) iff it satisfies \(c_2\). 
  Suppose \(\I_1\) is a model for \(c_1\). 
  We prove that \(\I_1\) is also a model for \(c_2\). 
  We need to show that \(\I_1\) satisfies the new constraints \(e \sqin s\) and \(p(e)\) 
  since it already satisfies \(\Sc_1\) and \(\Ec_1\) from our assumption. 
  The rule is applied because \(e' \sqin \filter(p, s') \in \Sc_1\) for some \(e', s' \in \ter{\Sc_1}\)
 such that \(e \teq e'\) and \(s \teq s'\) in \(\Sc_1\).  
 Therefore,  \(\I_1\) satisfies \(e \sqin \filter(p, s)\) and 
 \(\I_1(e) \in \I_1(\filter(p, s))\). 
 From the semantics of \(\filter(p, s)\), 
 it must be the case that \(\I(p(e))\) holds and \(\I(e) \in \I(s)\).
 Therefore \(\I_1\) satisfies \(e \sqin s\) and \(p(e)\), and hence satisfies \(c_2\).

 For other direction, suppose \(\I_2\) satisfies \(c_2 \). 
 Since \(\Sc_1 \subseteq \Sc_1 \cup \{e \sqin s\}\) 
 and \(\Ec_1 \subseteq \Ec_1 \cup p(e)\), 
 then \(\I_2\) satisfies \(c_1 = (\Sc_1, \Ec_1)\). 
 Therefore, rule \ruleSetFilterDown is sound. 

  For rule \ruleSetFilterUp, suppose the premise configuration \(c_1 = (\Sc_1, \Ec_1)\) 
  is satisfied by a model \(\I_1\).
  In that model, we have either \(\I(p(e))\) holds or \(\neg \I(p(e))\) holds.
  If \(\I(p(e))\) holds, then \(\I_1\) is a model for the first branch of the conclusion, 
  because \(\I(e) \in \I(\filter(p,s))\).
  If \(\neg\I(p(e))\) holds, then \(\I_1\) is a model for the second branch, 
  since  \(\I(e) \not\in \I(\filter(p,s))\). 
  Now suppose \(\I_2\) is a model for one of the two possible configurations   
  \(c_2 = (\Sc_1 \cup \{e \sqin \filter(p,s)\}, \Ec_1 \cup p(e))\)
  or \(c'_2 = (\Sc_1 \cup \{e \not\sqin \filter(p,s)\}, \Ec_1 \cup \neg p(e))\)
  in the conclusion. 
  Since \(e \sqin s \in \Sc_1^*\) in \(c_2\) and \(c'_2\), 
  then \(\I_2\) satisfies \(e \sqin s\). 
  Therefore \(\I_2\) is a model for the premise configuration.
\end{proof}

\subsection{Completeness Proof}\label{sec:completenessProof}
Here we prove the remaining cases for~\cref{prop:setModel}. 
  Recall that we assume, with no loss of generality, that all set equalities in
  \(\Sc_\mathrm{l}\) have the form \(x \teq t\) where \(x\) is a variable 
  that never occurs in the right-hand side of an equality of \(\Sc_\mathrm{l}\).
  This restriction allows us to define a well-founded ordering 
  \(\succ_\mathrm{T}\) over the set variables in \(\Sc_\mathrm{l}\)
  where \(x \succ_\mathrm{T} y\) iff \(x \teq t \in \Sc_\mathrm{l}\) and
  \(y\) occurs in \(t\).
  We argue by induction on \(\succ_\mathrm{T}\).
\begin{enumerate}
  \item[3.] \(s \teq [\empty]\). Rule \ruleEmptyUnsat would apply to the configuration if there was a constraint of
  the form \(x \sqin s \in \Sc^*\). 
  Since there is none, it follows that \(\I(s) = \{\} = \I([])\).
\item[4.] \(s \teq [x]\). It is sufficient to show that \(\I(s) = \{\I(x)\}\).
  Since rule \ruleSingleUp   is not applicable,
  we can conclude that \(x \sqin s \in \Sc^*\).
  It follows that \(\{\I(x)\} \subseteq \I(s)\).
  The other direction, \(\I(s) \subseteq \{\I(x)\} \), follows because of
  saturation with respect to rule \ruleSingleDown.
  \item[5.] \(s \teq t \sqcap u\).
  We need to show that \(\I(s) = \I(t) \cap \I(u) \).
  The proof of the left-to-right inclusion depends on rule \ruleInterDown:
  \begin{align*}
    x & \in \I(s)                                                                                                                      \\
    x & = \I(e) \text{ for some } e \text{ with } e \sqin s \in \Sc^* \tag{definition of \(\I(s)\)}                                    \\
    x & = \I(e) \text{ for some } e \text{ with } e \sqin t \sqcap u \in \Sc^* \tag{definition of \(\Sc^*\) and \(s \teq t \sqcap u\)} \\
    x & = \I(e),  e \sqin t \in \Sc, e \sqin u \in \Sc \tag{rule \ruleInterDown}                                                       \\
    x & \in \I(t) \text{ and }  x \in \I(u) \tag{by induction hypothesis since \(s \succ_\mathrm{T} t\) and \(s \succ_\mathrm{T} u\)}                                                                            \\
    x & \in \I(t) \cap \I(u)
  \end{align*}
  For the other direction, \(\I(t) \cap \I(u) \subseteq \I(s) \), we rely on rule \ruleInterUp:
  \begin{align*}
    x & \in \I(t) \cap \I(u)                                                                                                       \\
    x & \in \I(t) \text{ and } x \in \I(u)                                                                                         \\
    x & = \I(e_1) = \I(e_2) \text{ for some } e_1, e_2 
    \\ & \qquad \text{ with } e_1 \sqin t \in \Sc^*, e_2 \sqin u \in \Sc^* 
    \tag{by induction hypothesis since \(s \succ_\mathrm{T} t\) and \(s \succ_\mathrm{T} u\)}
    \\            
    x & = \I(e_1) = \I(e_2), e_1 \sqin t \in \Sc^*, e_2 \sqin u \in \Sc^*, e_1 \teq e_2 \in \Sc^* 
    \tag{saturation of rule \ruleEIdent}  \\
    x & = \I(e_1), e_1 \sqin t \in \Sc^*, e_1 \sqin u \in \Sc^*                                                                    \\
    x & = \I(e_1), e_1 \sqin t \sqcap u \in \Sc^* \tag{rule \ruleInterUp}                                                          \\
    x & = \I(e_1), e_1 \sqin s \in \Sc^* \tag{\(s \teq t \sqcap u\)}                                                               \\
    x & \in \I(s) \tag{definition of \(\I(s)\)}
  \end{align*}
  Notice that if we were to choose the second conclusion in rule \ruleEIdent, 
  \(e_1 \tneq e_2 \in \Ec\) would contradict that \(\I(e_1) = \I(e_2)\). 
  \item[6.] \(s \teq t \sqcup u\).
        We need to show that \(\I(s) = \I(t) \cup \I(u) \).
        The proof of the left-to-right inclusion depends on rule \ruleInterDown:
        \begin{align*}
          x & \in \I(s)                                                                                                                      \\
          x & = \I(e) \text{ for some } e \text{ with } e \sqin s \in \Sc^* \tag{definition of \(\I(s)\)}                                    \\
          x & = \I(e) \text{ for some } e \text{ with } e \sqin t \sqcup u \in \Sc^* \tag{definition of \(\Sc^*\) and \(s \teq t \sqcup u\)} \\
          x & = \I(e),  e \sqin t \in \Sc \text{ or } e \sqin u \in \Sc \tag{rule \ruleUnionDown}                                            \\
          x & \in \I(t) \text{ or }  x \in \I(u) \tag{by induction hypothesis since \(s \succ_\mathrm{T} t\) and \(s \succ_\mathrm{T} u\)}                                                                           \\
          x & \in \I(t) \cup \I(u)
        \end{align*}
        For the other direction, \(\I(t) \cup \I(u) \subseteq \I(s) \):
        \begin{align*}
          x & \in \I(t) \cup \I(u)                                                                                            \\
          x & \in \I(t) \text{ or } x \in \I(u)                                                                               \\
          x & = \I(e)  \text{ for some } e \text{ with } e \sqin t \in \Sc^* \text { or } e \sqin u \in \Sc^* 
          \tag{by induction hypothesis since \(s \succ_\mathrm{T} t\) and \(s \succ_\mathrm{T} u\)} \\
          x & = \I(e), e \sqin t \sqcup u \in \Sc^* \tag{\(t \sqcup u \in \ter{\Sc}\), rule  \ruleUnionUp}                    \\
          x & = \I(e), e \sqin s \in \Sc^* \tag{\(s \teq t \sqcup u\)}                                                        \\
          x & \in \I(s) \tag{definition of \(\I(s)\)}
        \end{align*}
  \item[7.] \(s \teq t \setminus u\).
        We need to show that \(\I(s) = \I(t) \setminus \I(u) \).
        The proof of the left-to-right inclusion depends on rule \ruleDifferenceDown:
        \begin{align*}
          x & \in \I(s)                                                                                                                            \\
          x & = \I(e) \text{ for some } e \text{ with } e \sqin s \in \Sc^* \tag{definition of \(\I(s)\)}                                          \\
          x & = \I(e) \text{ for some } e \text{ with } e \sqin t \setminus u \in \Sc^* \tag{definition of \(\Sc^*\) and \(s \teq t \setminus u\)} \\
          x & = \I(e),  e \sqin t \in \Sc^* \text{ and } e \not\sqin u \in \Sc^* \tag{rule \ruleDifferenceDown}                                    \\
          x & \in \I(t) \text{ and } x \not\in \I(u)           
          \tag{by induction hypothesis since \(s \succ_\mathrm{T} t\) and contradiction below}                                         \\
          x & \in \I(t) \setminus \I(u)
        \end{align*}
        Suppose by contradiction that \(x \in \I(u)\). Since 
        \(s \succ_\mathrm{T} u\)        
        ,by induction hypothesis there exists \(e_2\) such that \(x = \I(e) = \I(e_2)\) and
        \(e_2 \sqin u \in \Sc^*\).
        By saturation of \ruleEIdent either \(e \teq e_2 \in \Sc^* \cap \Ec^*, \) or
        \( e \tneq e_2 \in \Sc^* \cap \Ec^*\).
        The latter would lead to a contradiction because \(\I(e) = \I(e_2)\).
        The former would imply \(e \sqin u  \in \Sc^*\) which would trigger rule
        \ruleSetUnsat along with \(e \not\sqin u  \in \Sc^*\) to make the current branch \(\unsat\).
        This contradicts our assumption that the current branch is open.

        For the other direction, \(\I(t) \setminus \I(u) \subseteq \I(s) \):
        \begin{align*}
          x & \in \I(t) \setminus \I(u)                                                                                               \\
          x & \in \I(t) \text{ and } x \not\in \I(u)                                                                                  \\
          x & = \I(e)  \text{ for some } e \text{ with } e \sqin t \in \Sc^* 
          \tag{by induction hypothesis since \(s \succ_\mathrm{T} t\)}\\
          x & = \I(e), e \sqin t \setminus u \in \Sc^* \tag{\(t \setminus u \in \ter{\Sc}\), rule  \ruleDifferenceUp discussed below} \\
          x & = \I(e), e \sqin s \in \Sc^* \tag{\(s \teq t \setminus u\)}                                                             \\
          x & \in \I(s) \tag{definition of \(\I(s)\)}
        \end{align*}
        Rule \ruleDifferenceUp splits into two branches.
        One with \(e \sqin u \in \Sc^*\) and the other is the one we mentioned above \(e \sqin t \setminus u \in \Sc^*\).
        Suppose by contradiction that the current branch has \(e \sqin u \in \Sc^*\).
        Since \(s \succ_\mathrm{T} u\), by our induction hypothesis we have \(x \in \I(u)\) which contradicts \(x \not\in \I(u)\).
  \item[8.] \(s \teq t \product u\).
        We need to show that \(\I(s) = \I(t) \times \I(u) \).
        The proof of the left-to-right inclusion depends on rule \ruleProductDown:
        \begin{align*}
          \tup{x_1, \ldots, x_m, y_1, \ldots, y_n} & \in \I(s)                                                                                                         \\
          \tup{x_1, \ldots, x_m, y_1, \ldots, y_n} & = \tup{\I(a_1), \ldots, \I(a_m), \I(b_1), \ldots, \I(b_n)}                                                        \\
                                                   & \quad \text{ for some }  \tup{a_1, \ldots, a_m, b_1, \ldots, b_n} \sqin s \in \Sc^* \tag{definition of \(\I(s)\)} \\
          \tup{a_1, \ldots, a_m, b_1, \ldots, b_n} & \sqin t \product u \in \Sc^* \tag{definition of \(\Sc^*\) and \(s \teq t \product u\)}                            \\
          \tup{a_1, \ldots, a_m}                   & \sqin t \in \Sc, \tup{b_1, \ldots, b_n} \sqin u \in \Sc \tag{rule \ruleProductDown}                               \\
          \tup{x_1, \ldots, x_m}                   & \in \I(t), \tup{y_1, \ldots, y_n} \in \I(u) 
         \tag{by induction hypothesis since \(s \succ_\mathrm{T} t\) and \(s \succ_\mathrm{T} u\)} 
                                                                \\
          \tup{x_1, \ldots, x_m, y_1, \ldots, y_n} & \in \I(t) \times \I(u).
        \end{align*}
        For the other direction, \(\I(t) \times \I(u) \subseteq \I(s) \):
        \begin{align*}
          \tup{x_1, \ldots, x_m, y_1, \ldots, y_n}                 & \in \I(t) \times \I(u) \text{ implies }                                                                            \\
          \tup{x_1, \ldots, x_m}                                   & \in \I(t), \tup{y_1, \ldots, y_n} \in \I(u)                                                                        \\
          \tup{x_1, \ldots, x_m}                                   & = \tup{\I(a_1), \ldots, \I(a_m)}, \tup{y_1, \ldots, y_n} =  \tup{\I(b_1), \ldots, \I(b_n)}                         \\
                                                                   & \quad  \text{ for some }  \tup{a_1, \ldots, a_m} \sqin t, \tup{b_1, \ldots, b_n} \sqin u \in \Sc^* 
                                                                   \tag{by induction hypothesis since \(s \succ_\mathrm{T} t\) and \(s \succ_\mathrm{T} u\)} 
                                                                   \\
          \tup{a_1, \ldots, a_m, b_1, \ldots, b_n}                 & \sqin t \product u \in \Sc^* \tag{\(t \product u \in \ter{\Sc}\), rule  \ruleProductUp}                            \\
          \tup{a_1, \ldots, a_m, b_1, \ldots, b_n}                 & \sqin s \in \Sc^* \tag{\(s \teq t \product u\)}                                                                    \\
          \tup{\I(a_1), \ldots, \I(a_m), \I(b_1), \ldots, \I(b_n)} & \in \I(s) \tag{definition of \(\I(s)\)}                                                                            \\
          \tup{x_1, \ldots, x_m, y_1, \ldots, y_n}                 & \in \I(s) \tag{Property~\ref{prop:one}}.
        \end{align*}
\end{enumerate}
\subsection{\(\sigmap\) grammar}
Here we provide the grammar for language \(\sigmap\) used in 
Section~\ref{sec:undecidable_set_fragments}.
\begin{figure*}[h]
  \begin{center}
    \begin{bnf}
      \(\varphi\) : \textsf{Set Formula} ::=
      \(e \sqin s\) //
      \(s_1 \sqsubseteq s_2\) //
      \(e_1 \teq e_2\) //
      \(s_1 \teq s_2\)
      | \(\varphi_1 \wedge \varphi_2\) //
      \(\varphi_1 \vee \varphi_2\) //
      \(\neg\varphi_1\);;
      \(e\) : \textsf{Element term} ::=
      \(k\)
      // \(x\)
      // \( \tup{x_1, x_2}    \);;
      \(s\) : \textsf{Set term} ::=
      \(x\)
      // [e]
      // \(s_1 \sqcup s_2\)
      // \(s_1 \sqcap s_2\)
      // \(\map(\lambdaExpr, s_1)\)
      ;;
      \(\lambdaExpr\) : \textsf{Lambda term}  ::=
      \(\lambda x. \arithExpr\)
      // \(\lambda \tup{x_1, x_2}. \arithExpr\)
      // \(\lambda \tup{x_1, x_2}. \tup{\arithExpr_1, \arithExpr_2}\)
      ;;
      \(\arithExpr\) : \textsf{LIA term}  ::=
      k
      // \(x\)
      // \(k \cdot x\)
      // \(\arithExpr_1 + \arithExpr_2\)
      // \(- \arithExpr_1\)
      | \(\ite(\varphi^{\lian},\arithExpr_1, \arithExpr_2)\)
      ;;
      \(\varphi^{\lian}\) : \textsf{LIA constraint} ::=
      \(\arithExpr_1 \teq \arithExpr_2\)
      // \(\arithExpr_1 > \arithExpr_2\)
      // \(\arithExpr_1 \geq \arithExpr_2\)
      | \(\varphi^{\lian}_1 \wedge \varphi^{\lian}_2\)
      // \(\varphi^{\lian}_1 \vee \varphi^{\lian}_2\)
      // \(\neg \varphi^{\lian}_1 \)
      ;;
    \end{bnf}
  \end{center}
  \caption{\(\sigmap\) grammar.}
  \label{fig:grammar}
\end{figure*}
\end{document}